\documentclass[lettersize,journal]{IEEEtran}
\usepackage{amsmath,amsfonts}
\usepackage{algorithmic}
\usepackage{algorithm}
\usepackage{array}
\usepackage[caption=false,font=normalsize,labelfont=sf,textfont=sf]{subfig}
\usepackage{textcomp}
\usepackage{stfloats}
\usepackage{url}
\usepackage{verbatim}
\usepackage{graphicx}
\usepackage{cite}
\hyphenation{op-tical net-works semi-conduc-tor IEEE-Xplore}



\newtheorem{assumption}{Assumption}
\newtheorem{definition}{Definition}
\newtheorem{theorem}{Theorem}
\newtheorem{lemma}{Lemma}
\newtheorem{remark}{Remark}

\newenvironment{proof}{{\noindent\it Proof}\quad}{\hfill $\square$\par}

\begin{document}

\title{Anti-Delay Kalman Filter Fusion Algorithm for Vehicle-borne Sensor Network with Finite-Time Convergence}

\author{Hang Yu, 
	Keren~Dai,~\IEEEmembership{Member, IEEE},
	Qingyu~Li,
	Haojie~Li,
	Yao~Zou,~\IEEEmembership{Member,~IEEE,}
	Xiang~Ma,
	Shaojie~Ma
	and~He~Zhang
	\thanks{\textit{Corresponding author: Keren Dai}.}    
	\thanks{H. Yu, K. Dai, H. Li, X. Ma, S. Ma, H. Zhang are with School of Mechanical Engineering, Nanjing University of Science and Technology, Nanjing, Jiangsu, 210094 China, e-mail:(dkr@njust.edu.cn).}
	\thanks{Q. Li is with the North Information Control Research Academy Group Company, Ltd., Nanjing, Jiangsu, 211153, China.}
	\thanks{Y. Zou is the with School of Automation and Electrical Engineering, University of Science and Technology Beijing, Beijing 100083  P. R. China, and with the Institute of Artificial Intelligence, University of Science and Technology Beijing, Beijing 100083, China.}
\thanks{Manuscript received ---; revised ----.}}

\markboth{Journal of \LaTeX\ Class Files,~Vol.~14, No.~8, August~2021}%
{Shell \MakeLowercase{\textit{et al.}}: A Sample Article Using IEEEtran.cls for IEEE Journals}


\maketitle

\begin{abstract}
Intelligent vehicles in autonomous driving and obstacle avoidance, the precise relative state of vehicles put forward a higher demand.
For a vehicle-borne sensor network with time-varying transmission delays, the problem of coordinate fusion of vehicle state is the focus of this paper.
By the ingeniously designed low-complexity integration with a consensus strategy and buffer technology, an anti-delay distributed Kalman filter (DKF) with finite-time convergence is proposed. 
By introducing the matrix weight to assess local estimates, the optimal fusion state result is available in the sense of linear minimum variance.
In addition, to accommodate practical engineering in intelligent vehicles, the communication weight coefficient and directed topology with unidirectional transmission are also considered.
From a theoretical perspective, the proof of error covariances upper bounds with different communication topologies with delays are presented. 
Furthermore, the maximum allowable delays of vehicle-borne sensor network is derived backwards.
Simulations verify that while considering various non-ideal factors above, the proposed DFK algorithm produces more accurate and robust fusion estimation state results than existing algorithms, making it more valuable in practical applications. Simultaneously, a mobile car trajectory tracking experiment is carried out, which further verifies the feasibility of the proposed algorithm.	
\end{abstract}

\begin{IEEEkeywords}
Kalman filter, distributed algorithm, finite-time convergence, transmission delays, intelligent vehicles.
\end{IEEEkeywords}

\section{Introduction}
\IEEEPARstart{W}{ith} the development of sensing techniques and vehicle communication facilities, some vehicles can interact with vehicle-borne sensors to achieve accurate positioning by using data fusion technology.
Multi-sensor data fusion technology can solve dynamic target tracking problems and is also research hotspot in the field of intelligent vehicles, vehicle network, autonomous driving and obstacle avoidance \cite{sun2017multi,sun2020distributed,dormann2018optimally,8897571,9035421,9378811}.  According to the literature, there are currently two methods for data fusion technology. The first method is a centralized filter \cite{willner1976kalman}, in which the observations of all sensors are transmitted to the fusion central to generate a state estimation with a highest accuracy. However, due to the need to process a large amount of data, this method leads to a severe computational burden. Also, sensor failures will directly affect the final fusion result and produce poor accuracy and stability. 
This feature will seriously affect the safety of vehicles.
The second method is a distributed
filter, in which the local estimator of the sensor can obtain the global optimal or suboptimal state estimate according to a certain information fusion algorithm. The distributed filter exhibits a low complexity and computational burden, and is resistant to sensor failures. This is very important for accurate vehicle positioning and tracking in complex traffic environment.
Thus, this study's design incorporates distributed technologies.

Over the past decades, Kalman filtering (KF) has been widely studied in the fields of target detection, location and tracking, industrial monitoring and signal processing \cite{song2007optimal,sun2004multi,sun2004multi2,hashemipour1988decentralized,shen2012globally,8897571,9705142}. In particular, the DKF  approach  has been developed for many years due to its good performance at dynamic target tracking, which is widely used in the vehicle network \cite{lian2020distributed,marelli2018distributed,ruan2018globally,liu2018distributed,9035421,9050874}.    

To facilitate the accurate control of vehicle state in the process of positioning and tracking, the sensor network is required to have the consistency of target state estimation results. Consensus control has been widely used in the formation of UAVs, cluster satellites and the coordination of mobile robots \cite{xiao2004fast}.
In the field of multi-sensor networks, consensus research has been performed for many years  \cite{2020Distributed,he2018consistent,xiao2004fast,carli2008distributed,das2015distributed}.  There is a common feature in these studies: the average consensus strategy is used. However, this strategy has some disadvantages. First, theoretically, infinite iterations are required to achieve asymptotic convergence. Second, the stop strategy of a multi-sensor system cannot be clearly described in practical applications \cite{wu2018distributed}. 
But for vehicle sensor networks, it is necessary to judge the relative state of vehicles in a short time. Therefore, it requires that the estimation error of the filtering algorithm has the characteristics of finite-time convergence.

To overcome these shortcomings, the DKF algorithm with finite-time convergence is proposed in \cite{wu2018distributed,liu2018distributed,di2015distributed,thia2013distributed}.  Using the maximum consensus technique, a finite-time KF algorithm with local unobservability was proposed in \cite{liu2018distributed}. In \cite{wu2018distributed},  a finite-time DKF algorithm was proposed, which can achieve dynamic monitoring of linear discrete dynamic systems using a sensor network with active sensors and some idle sensors
. In \cite{di2015distributed,thia2013distributed}, using maximum consistency technology, the sensor network can achieve consistency in finite time.

In addition, the necessary interactive connections between multi-sensor systems are typically established through vehicle network. However, in practical applications, transmission delays in vehicle network are inevitable. Due to transmission delays, the filtering accuracy of a multi-sensor system may be reduced, resulting in excessive deviation of the estimation of vehicle state. When error accumulates to a certain extent, it can affect system stability and make the algorithm unusable. Therefore, research on the anti-delay of vehicle network is important and necessary \cite{zou2021sampled}.

To date, many researchers have investigated transmission delays in multi-sensor systems \cite{2020Distributed,xing2016multisensor,dong2019stability,zhang2017leader,jin2021distributed,sun2012optimal,shi2009kalman,chen2014distributed,li2019event}.
Research on transmission delays can be primarily divided into two groups: single-channel and multi-channel sensor network systems. For a single channel sensor network system, \cite{sun2012optimal} designed a linear filter based on the augmented method but exhibited high computational costs. Thus, a nonaugmented filter was proposed using the orthogonal projection method \cite{sun2014modeling}. In \cite{shi2009kalman}, a KF algorithm with a finite length buffer was designed using timestamp technology.
For a multi-channel sensor network system, a DKF algorithm that can deal with transmission delay was proposed in \cite{2020Distributed}, and the algorithm can achieve a consensus of estimation error. In \cite{xing2016multisensor}, considering the existence and correlation of process noise and measurement noise, a DKF algorithm that considers the transmission delays of a sensor network was proposed.
For the transmission delays and packet dropout in multi-channel sensor networks, a fusion estimation method with matrix weighting was designed in \cite{chen2014distributed}.

To the best of the authors' knowledge, in the filed of vehicle network, the aforementioned references do not take into account the finite-time consensus, communication weight and anti-delay of a vehicle-borne sensor network at the same time, while these performance criteria and non-ideal factors are critical for target vehicle precise positioning and tracking. Thus, a DFK algorithm that addresses all these issues would be valuable and challenging because it would be more suitable and feasible for various applications.
The main contributions of this study are as follows:
\begin{enumerate}
\item Inspired by  \cite{wu2018distributed,2020Distributed,liu2018distributed,shi2009kalman,xing2016multisensor,chen2014distributed,shi2009kalman,2020Distributed,li2019event}, we integrate the finite-time control strategy and buffer structure to achieve finite convergence of vehicle state estimation error and anti-delay simultaneously.
The ingenious trick is using distributed design and optimal vector weight technology to address the high complexity problem caused by direct integration.
This makes it possible for the designed algorithm to be practically applied in the vehicle network.
 With this new architecture, the proof of vehicle state estimate error convergence in finite time has been completed.

\item In the case of transmission delay for vehicle-borne sensor network, the upper bound of the error covariance is established by using the inverse matrix theorem and Gramian matrix characteristics. Based on this result,  the maximum possible delay under the allowable detection accuracy can be obtained.

\item Considering the complex traffic environment, the ideal undirected communication topology is difficult to guarantee. The requirement of communication topology is reduced from an undirected graph to a strongly connected digraph. It is proved that under the condition of a digraph, finite-time the convergence of estimation error  and anti-delay performance can be achieved simultaneously.  Furthermore, the upper bound of the estimation error covariance with a strongly connected digraph topology is derived. This result makes the proposed algorithm applicable to more scenarios with worse communication conditions. 

\item Simulations show that the proposed DFK algorithm generates results for fusion estimating state that are more reliable and accurate than those of existing algorithms. A mobile car trajectory tracking experiment is also conducted to confirm the viability of the suggested method.
\end{enumerate}

The remainder of this paper is organized as follows. Section \uppercase\expandafter{\romannumeral 2} describes necessary preliminaries, notations and transmission delay problems. The DKF with finite-time convergence and transmission delays are primarily described in Section \uppercase\expandafter{\romannumeral 3}. Simulation analysis and tracking experiment of the DKF are presented in Section \uppercase\expandafter{\romannumeral5}. Finally, conclusions are given in Section \uppercase\expandafter{\romannumeral6}.

\section{Preliminaries and notations}\label{Problem statement}
\subsection{Problem statement}
In this paper, the distributed vehicle state observation problem of a linear system by a vehicle-borne sensor network with transmission delays is studied. In particular, the observed state evolves from the following discrete-time linear dynamics:

\begin{equation}\label{system_model}
	x_{k+1} = \varPhi _k x_k + w_k
\end{equation}
where $x_k\in \mathbb{R}^{n_x}$ is the target vehicle state to be estimated, $k$ is the step size, $\varPhi_k\in \mathbb{R}^{n_x\times n_x}$ is a known system matrix, and $w_k \in \mathbb{R}^{n_x}$ is the process noise, which is assumed to be Gaussian white noise with zero-mean. 
The sensor network consists of  $n$  sensors, and each sensor has the following measurement equation:
\begin{equation}\label{measurement_equation}
	{y_k^i} = {H_k^i}x_k + {v_k^i}, \qquad i = 1, 2, \dots, n
\end{equation}
where ${y_k^i}\in \mathbb{R}^{m_i}$ is the measurement value of sensor $i$, ${H_k^i}\in \mathbb{R}^{m_i\times n_x}$ is a known measurement matrix, and ${v_k^i} \in \mathbb{R}^{m_i}$ is the  measurement noise, which is also assumed to be Gaussian white noise with zero-mean. In addition, some reasonable assumptions associated with the process $w_k$ and each measurement noise $v_k^i$ are made as follows.   

\begin{assumption}\label{noise_uncorrected}
	The process noise $w_k$ and each measurement noise ${v_k^i}$ are independent. Then:
	\begin{equation}
		\begin{gathered}
			\mathbb{E}[v_t^i{(v_k^j)^T}] = 0, \hfill \\
			\mathbb{E}\left\{ {\left[ {\begin{array}{*{20}{c}}
						{{w_t}} \\ 
						{v_t^i} 
				\end{array}} \right]\left[ {\begin{array}{*{20}{c}}
						{w_k^T}&{{{(v_k^i)}^T}} 
				\end{array}} \right]} \right\} = \left[ {\begin{array}{*{20}{c}}
					{{Q_k}}&0 \\ 
					0&{R_k^i} 
			\end{array}} \right]{\delta _{tk}}, \hfill \\
			{\text{}}i \ne j;\forall t,k, \hfill \\ 
		\end{gathered} 
	\end{equation}
	where $\mathbb{E}(\cdot$) denotes the expectation operator, superscript $T$ denotes the transpose, $Q_k$ and $R_k^i$ are the covariance matrices of $w_k$ and ${v_k^i}$, respectively, and $\delta_{tk}$ is the Kronecker function, which has the following form:
	\begin{equation}\label{Kronecker function}
		\delta_{tk} =
		\begin{cases}
			1,  & {\rm{if}} \ k=t\\
			0,  & {\rm{if}} \ k\neq t
		\end{cases}
	\end{equation}
\end{assumption}

\begin{assumption}\label{initial state}
	The initial state $x_0$ is independent of $w_k$ and $v_k^i$. Then:
	\begin{equation}
		{\mathbb{E}}\left[ x_0 \right] = \mu_0 \nonumber
	\end{equation}
	\begin{equation}
		{\mathbb{E}}\left[ \left(x_0- \mu_0\right)\left(x_0- \mu_0\right)^T \right] = P_0 \nonumber
	\end{equation}
\end{assumption}

\begin{assumption}\label{boundnessed}
	Systems \eqref{system_model} and \eqref{measurement_equation} are uniformly and completely observable, and then there is a Gramian matrix $M_{k+\bar{n},k}$,  which satisfies the following inequality for any instant $k$:
	\begin{equation}\label{completely_observable}
		\alpha I_{n} \leq M_{k+\bar{n},k} = \sum_{l=k}^{k+\bar{n}}O^T_{l,k} \varPhi^T _l R^{-1}_l \varPhi_l O_{l,k}  \leq \beta I_n
	\end{equation}
	where $\alpha$ and $\beta$ are positive constants,  $\bar{n}$ is a positive integer, matrix $O_{l,k} $ has the form of $O_{k,k} = I_n$, $O_{k+1,k} = \varPhi_k$, and $O_{l,k} = \varPhi_{l-1} \cdots  \varPhi_{k}$ for $l\geq k+1$, $\varPhi_l = \textrm{col} \{\varPhi_{l,1}, \dotsm,  \varPhi_{l,n}\}$  is a column block matrix, and $R_l = \textrm{diag}\{R_{l,1}, \dotsm,  R_{l,n}\} $  is a diagonal block matrix. And, let $\bar{\alpha} I_{n} \leq M_{k+\bar{n}+n-1,k} \leq \bar{\beta} I_n$, where $\bar{\alpha}$ and $\bar{\beta}$ are larger than $\alpha$ and $\beta$, respectively.
\end{assumption}

Given a vehicle-borne sensor network, the communication topology among sensors is described by an undirected graph, $\mathcal{G}=(\mathcal{V},\mathcal{E},\mathcal{W})$, which consists of a node set $\mathcal{V} =\{1,2,\dots, n \} $, an edge set $\mathcal{E}\subseteq~\mathcal{V}\times\mathcal{V}$,
and a weighted matrix $\mathcal{W}=[\omega_{ij}]\in\mathbb{R}^{n \times n}$, and the elements of matrix $\mathcal{W}$ are nonnegative (i.e., $\omega_{ij} \geq 0$,$i,j \in \mathcal{V}$).
For an undirected graph $\mathcal{G}$, $(i,j)\in \mathcal{E}\Leftrightarrow(j,i)\in \mathcal{E}$, that is, nodes $i$ and $j$ can sense each other (i.e., $\omega_{ij}=\omega_{ji}$). Graph $\mathcal{G}$ characterizes the communication topology among sensors and is connected if there exists a path involving all nodes.
If a connected undirected graph has no cycle, it is a tree graph. The set of neighbor sensors connected to sensor $i$ is denoted by $\mathcal{N}_i = \{j \in  \mathcal{V}| (i,j) \in \mathcal{E}\}$, and every sensor $i$ collects a measurement $y_i(k)$ at time $k$. The longest path between the two sensors is the diameter $d_{g}$ of graph $\mathcal{G}$ \cite{wu2018distributed,yu2021distributed}.
\begin{assumption}\label{varPhi_initial_cov_invertible}
	\begin{enumerate}
		\item The system matrix $\varPhi_{k}$  is invertible for any $k \in N^+$, and satisfies $\varPhi_{k}^{-1} \geq \eta I$, where $\eta$ is a positive constant;	
		\item Initial covariance $[P_0]^{-1} \geq 0$.
	\end{enumerate}	
\end{assumption}

\begin{lemma}\label{lemma1_matrix_invert_beta}
	\cite{battistelli2014kullback,2020Distributed} Consider the following equality:
	\begin{equation}\nonumber
		\varUpsilon(\varOmega) = (\varPhi \varOmega \varPhi ^T +Q)^{-1}
	\end{equation} 
	If $\varPhi$ is invertible, then the following results hold:
	\begin{enumerate}
		\item If there exits a positive constant $\hat{\gamma} \leq 1$, then the inequality $\varUpsilon(\varOmega) \geq \hat{\gamma} \varPhi^{-T}\varOmega \varPhi ^{-1}$ holds for any $\varOmega \leq \hat{\varOmega} $, where $\hat{\varOmega} $ is any positive semidefinite matrix.
		\item If there exits a positive constant $\breve{\gamma} < 1$, then the inequality $\varUpsilon(\varOmega) \geq \breve{\gamma} \varPhi^{-T}\varOmega \varPhi ^{-1}$ holds for any $\varOmega \geq \breve{\varOmega} $, where $\breve{\varOmega} $ is any positive semidefinite matrix.
	\end{enumerate}	
\end{lemma}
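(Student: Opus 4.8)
The plan is to strip away $\varPhi$ and $Q$ by an invertible congruence, reducing both parts to a single Loewner inequality, and then to read off the constant $\gamma$ from a scalar eigenvalue estimate. First I would use the invertibility of $\varPhi$ (Assumption \ref{varPhi_initial_cov_invertible}) to factor the argument of the inverse as $\varPhi\varOmega\varPhi^T+Q=\varPhi(\varOmega+\tilde Q)\varPhi^T$ with $\tilde Q:=\varPhi^{-1}Q\varPhi^{-T}\ge 0$, so that $\varUpsilon(\varOmega)=\varPhi^{-T}(\varOmega+\tilde Q)^{-1}\varPhi^{-1}$. Multiplying the target inequality $\varUpsilon(\varOmega)\ge\gamma\,\varPhi^{-T}\varOmega\varPhi^{-1}$ on the left by $\varPhi^T$ and on the right by $\varPhi$ (an invertible congruence, which preserves the Loewner order in both directions) collapses each claim to the $\varPhi$-free and $Q$-free inequality
\begin{equation}\nonumber
(\varOmega+\tilde Q)^{-1}\ge\gamma\,\varOmega .
\end{equation}
One reduction serves both parts; only the admissible range of $\varOmega$ and the way $\gamma$ is extracted differ.

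Next I would convert this into a scalar eigenvalue condition. Conjugating by $(\varOmega+\tilde Q)^{1/2}$ shows that $(\varOmega+\tilde Q)^{-1}\ge\gamma\,\varOmega$ is equivalent to $\gamma\,(\varOmega+\tilde Q)^{1/2}\varOmega(\varOmega+\tilde Q)^{1/2}\le I$, i.e.\ to $\gamma\,\lambda_{\max}\!\big((\varOmega+\tilde Q)\varOmega\big)\le 1$, using that $(\varOmega+\tilde Q)\varOmega$ shares its real, nonnegative spectrum with the symmetric matrix $(\varOmega+\tilde Q)^{1/2}\varOmega(\varOmega+\tilde Q)^{1/2}$. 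For part 1 the hypothesis $0\le\varOmega\le\hat\varOmega$ caps this eigenvalue uniformly: by submultiplicativity of the spectral norm on positive semidefinite factors, $\lambda_{\max}\!\big((\varOmega+\tilde Q)\varOmega\big)\le\big(\lambda_{\max}(\hat\varOmega)+\lambda_{\max}(\tilde Q)\big)\lambda_{\max}(\hat\varOmega)=:\kappa$, so the choice $\hat\gamma:=\min\{1,\kappa^{-1}\}$ is admissible ($\hat\gamma\le 1$) and closes part 1.

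Part 2 is the step I expect to be the main obstacle, because only the lower bound $\varOmega\ge\breve\varOmega$ is given and it does not cap $\lambda_{\max}(\varOmega)$; as $\varOmega$ grows, $\lambda_{\max}\!\big((\varOmega+\tilde Q)\varOmega\big)$ is unbounded, so no single $\breve\gamma$ can be read off from $\breve\varOmega$ alone. The plan is to pair the lower bound with the uniform upper boundedness of the admissible error covariances furnished by the complete observability of Assumption \ref{boundnessed}, sandwiching $\breve\varOmega\le\varOmega\le\hat\varOmega$ so that the part 1 estimate reapplies with the same constant $\kappa$; the required strict constant is then any $\breve\gamma\in(0,1)$ with $\breve\gamma\le\kappa^{-1}$, which exists regardless of the size of $\kappa$. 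I would finish by checking that every congruence step above is reversible, so the reduced inequalities are genuinely equivalent to the original statements and the constants $\hat\gamma,\breve\gamma$ transfer back verbatim.
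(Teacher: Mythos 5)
You should know at the outset that the paper never proves Lemma~\ref{lemma1_matrix_invert_beta}: it is imported with citations from \cite{battistelli2014kullback,2020Distributed}, so there is no in-paper argument to compare against, and your attempt has to be judged on its own terms and against the cited source. On those terms, your part~1 is correct for the statement as printed: the factorization $\varUpsilon(\varOmega)=\varPhi^{-T}(\varOmega+\tilde Q)^{-1}\varPhi^{-1}$ with $\tilde Q=\varPhi^{-1}Q\varPhi^{-T}$, the reversible congruence reducing the claim to $(\varOmega+\tilde Q)^{-1}\ge\gamma\,\varOmega$, the passage to $\gamma\,\lambda_{\max}\bigl((\varOmega+\tilde Q)^{1/2}\varOmega(\varOmega+\tilde Q)^{1/2}\bigr)\le 1$, and the uniform cap $\kappa=\bigl(\lambda_{\max}(\hat\varOmega)+\lambda_{\max}(\tilde Q)\bigr)\lambda_{\max}(\hat\varOmega)$ for $0\le\varOmega\le\hat\varOmega$ are all sound, and $\hat\gamma=\min\{1,\kappa^{-1}\}$ is admissible.

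Your diagnosis of part~2 is also right --- as printed it is false (scalar check: $\varPhi=1$, $Q=q>0$, $\varOmega=\omega\to\infty$ gives $(\omega+q)^{-1}\to 0$ against $\breve\gamma\,\omega\to\infty$) --- but your repair is not a proof of part~2: sandwiching $\breve\varOmega\le\varOmega\le\hat\varOmega$ via Assumption~\ref{boundnessed} silently replaces the hypothesis ``for any $\varOmega\ge\breve\varOmega$'' by a two-sided one, i.e.\ it re-proves part~1 on a smaller domain. The actual resolution is that the printed lemma is a garbled transcription of the source: in \cite{battistelli2014kullback} the operator is $\varUpsilon(\varOmega)=(\varPhi\varOmega^{-1}\varPhi^T+Q)^{-1}$ (the argument is an information matrix and enters inverted), part~1 then holds for $\varOmega\le\hat\varOmega$ with $\hat\gamma=\bigl(1+\lambda_{\max}(\tilde Q^{1/2}\hat\varOmega\tilde Q^{1/2})\bigr)^{-1}\le 1$, and part~2 carries the \emph{reversed} inequality $\varUpsilon(\varOmega)\le\breve\gamma\,\varPhi^{-T}\varOmega\varPhi^{-1}$, a strict contraction with $\breve\gamma=\bigl(1+\lambda_{\min}(\tilde Q)\lambda_{\min}(\breve\varOmega)\bigr)^{-1}<1$ valid for all $\varOmega\ge\breve\varOmega>0$ when $Q>0$ --- each part needing only a one-sided bound. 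Your own congruence reduction settles both corrected parts in one line, since they collapse to $I+\varOmega^{1/2}\tilde Q\varOmega^{1/2}\le\hat\gamma^{-1}I$ and $I+\varOmega^{1/2}\tilde Q\varOmega^{1/2}\ge\breve\gamma^{-1}I$ respectively, with $\lambda_{\max}(\varOmega^{1/2}\tilde Q\varOmega^{1/2})=\lambda_{\max}(\tilde Q^{1/2}\varOmega\tilde Q^{1/2})$ controlled from above by $\hat\varOmega$ and from below by $\lambda_{\min}(\tilde Q)\breve\varOmega$. Finally, be aware that the literal part~1 you proved is not what the paper actually invokes: \eqref{bounded_invertible_trans} uses $(\varPhi P\varPhi^T+Q)^{-1}\ge\hat\gamma\,\varPhi^{-T}P^{-1}\varPhi^{-1}$, with an inverse on the right, which is exactly the corrected (inverted) form under $\varOmega=P^{-1}\le\hat\varOmega$; proving only the printed statement would leave Theorem~\ref{boundedness of error covariance} unsupported.
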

\subsection{buffer technology}
In actual vehicle usage scenarios, subject to the poor sensor performance and the delays influence of external complex traffic environment interference, information time-varying transmission delays inevitably exist in vehicle-borne sensor network. 
Each time-varying transmission delay is represented by $\{d^t_{ij}(k)|d^t_{ij}(k) \in \{0,1,\dots,d_{t}\}, i \in\mathcal{V}, j\in \mathcal{N}_i, k\in N^+\}$, where $d_{t}$ denotes the maximum delay step. 
According to the information transmission process of sensor network, let the data transmission instant is $s$, which satisfies $s \in D_t(k)$,  where $D_t(k)= \{0,\dots,k \}$ for $k \in \{0,\dots, d_t+1\}$, and $D_t(k)= \{k-d_t,\dots,k \}$ for $k \in \{d_t+2,\dots, +\infty\}$. 
To solve the concerned observation problem subject to transmission delays, the transmitted information packets are attached with time stamps such that each sensor is able to identify the transmission instant of each received packet. 
The buffer technology inspired by \cite{shi2009kalman,2020Distributed} is introduced to store the transmitted packets. 
The buffer structure of sensor $i$ is shown in Fig. \ref{fig_buffer}, there are $(|\mathcal{N}_i|-1) \times L$ blocks with $|\mathcal{N}_i|-1$ as the neighbor number of sensor $i$, and $L$ as the buffer length.

The buffer of sensor $i$ has the following working steps at instant $k$:
\begin{enumerate}
	\item Discard the earliest data, i.e., those $(j_1, k-L), \dots,  (j_{|\mathcal{N}_i|-1}, k-L)$.
	\item Store the latest data from the neighbors at time $k$, i.e., those in  blocks $(j_1, k), \dots,  (j_{|\mathcal{N}_i|-1}, k)$.
\end{enumerate}

Moreover, introduce a variable $\gamma^{i,j}_{k}(s)$ to characterize the packet transmission process. In particular, if the packet transmitted from sensor $j$ is received by sensor at or before instant $k$, then, $\gamma^{i,j}_{k}(s)=1$,  otherwise, $\gamma^{i,j}_{k}(s)=0$. Note that, $\gamma^{i,i}_{k}(s) \equiv 1$.

\begin{figure}[!t]
	\centering
	\includegraphics[width=3.2in]{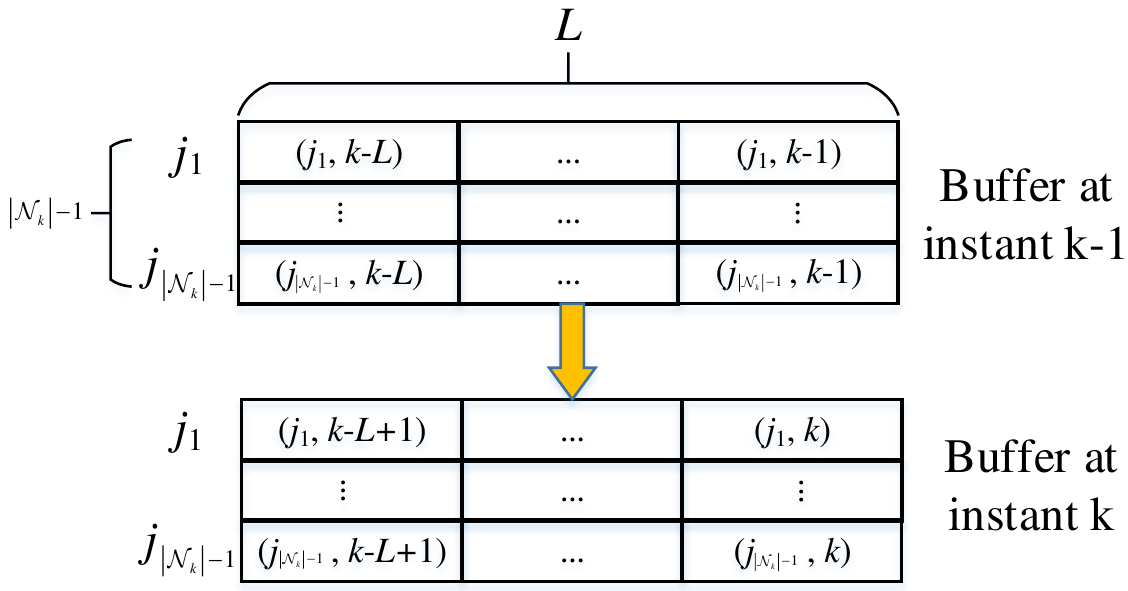}
	\caption{Buffer of sensor $i$.}
	\label{fig_buffer}
\end{figure}

Throughout this paper, the notations are standard. $\mathbb{R}^{n_x}$ and  $\mathbb{R}^{m_i\times n_x}$ stand for the $n_x$-dimensional Euclidean space and set of all ${m_i\times n_x}$ matrices, respectively. The superscript $T$ stands for the transpose of the matrix. The superscript $-1$ stands for the inverse of the matrix. The symmetric matrices $A$ and $B$, where $A \leq B$ stands for  $A-B$ are positive-definite matrices. The notation ${\mathbb{E}}\{\cdot\}$ stands for expectation of variable. $|\cdot|$  stands for the cardinal number of set. $\textrm{diag}\{\cdot\}$ stands for the diagonal matrix element. $\hat{x}_k^i(\cdot)$ and ${x}_k^i(\cdot)$ stand for the prior and posterior estimates of state $x_k$ by sensor $i$, respectively.

\section{Anti-delay DKF algorithm}\label{DKF}

In this section, the anti-delay DKF algorithm is introduced, which has the properties of finite-time convergence and robustness with transmission delays.
The communication weight coefficient and the fusion matrix weight are also considered to accommodate practical engineering, then, it can realize the tracking and positioning of the target vehicle in a more realistic scene.

\begin{algorithm} 
	\caption{Anti-delay DKF} 
	\label{alg1:Framwork} 
	\begin{algorithmic}
		\REQUIRE 
		\STATE $x_{1}^i(0)= \mu_0$, $P_{1}^i(0)=P_0$.	
		\ENSURE 	
		\STATE \textbf{Step 1:} \textbf{If} $k=1$  \textbf{then} \\
		Calculate ${x}{_{1}^i}(1)$ and error covariance  ${P}{_{1}^i}(1)$ by FtDKF, when $k=1$, $s=1$.\\
		\textbf{else if} $2 \leq k \leq d_t+1$ \textbf{then} \\
		\   Let  
		\begin{equation}
			{x}{_{k }^i}(1) =  {x}{_{k-1 }^i}(1)
		\end{equation}
		\begin{equation}
			{P}{_{k }^i}(1) =  {P}{_{k-1 }^i}(1)
		\end{equation}
		\FOR {$s= 2, \dots ,k$}
		\item 
		$\left[{x}{_{k }^i}(s), {P}{_{k }^i}(s)\right] = {\text{FtDKF}}\left[ {x}{_{k }^i}(s-1), {P}{_{k }^i}(s-1), { {w}}_k^{i,j}(s) \right]$
		\ENDFOR \\
		\textbf{else if} $ k \geq d_t+2$ \textbf{then} \\
		\ 	Let  
		\begin{equation}\label{k-d_x}
			{x}{_{k}^i}(k-d_t) =  {x}{_{k-1 }^i}(k-d_t)
		\end{equation}
		\begin{equation}\label{k-d_P}
			{P}{_{k}^i}(k-d_t) = {P}{_{k-1}^i}(k-d_t)
		\end{equation}
		\FOR {$s= k-d_t+1, \dots ,k$}
		\item 
		$\left[{x}{_{k }^i}(s), {P}{_{k }^i}(s)\right] = {\text{FtDKF}}\left[ {x}{_{k }^i}(s-1), {P}{_{k }^i}(s-1), { {w}}_k^{i,j}(s) \right]$
		\ENDFOR\\
		\STATE \textbf{Step 2:} Optimal weighted fusion\\
		\FOR {$k \geq 1$}
		\item 
		\begin{equation}\label{code2_fusion}
			{x}^f_{k}= \varGamma_1 {x}_{k}^1(k)+ \varGamma_2 {x}_{k}^2(k)+\dots+\varGamma_n {x}_{k}^n(k)
		\end{equation}
		\ENDFOR 	
	\end{algorithmic}
\end{algorithm}

${x}_{k}^i(s)$, ${P}^i_{k}(s)$ and $x^f_{k}$ are defined as the local posterior estimation, estimate error covariance and optimal fusion estimate of target vehicle state $x_k$ by sensor $i$ at instant $k$, respectively.

The presented anti-delay DKF algorithm is summarized as Algorithm \ref{alg1:Framwork}.
A finite-time distributed Kalman filter (FtDKF) algorithm, which is shown in Algorithm \ref{alg2:Framwork}.

\begin{algorithm}  
	\caption{\ FtDKF} 
	\label{alg2:Framwork} 
	\begin{algorithmic}
		\STATE \textbf{Step 1:} let $t=1$, calculate $\psi _{s, t-1}^{i\rightarrow j}  = {(H_s^i)^T}{(R_s^i)^{ - 1}}y_{s}^i$, $ \varphi _{s, t-1}^{i\rightarrow j} = {(H_s^i)^T}{(R_s^i)^{ - 1}}H_s^i$. \\
		\STATE \textbf{Step 2:} Transmit  $\psi _{s, t-1}^{i\rightarrow j} $ and $\varphi _{s, t-1}^{i\rightarrow j} $ to sensor $j$, where $j\in  \mathcal{N}_i$.	
		\FOR{ After iteration $t=1,2, \dots, d_{g}$, for each sensor $i$, and $(i,j) \in \mathcal{E} $ }
		\item
		\begin{equation}\label{code3:measure_sum}
			{\Theta _i}(t) = {(H_s^i)^T}{(R_s^i)^{ - 1}}y_s^i + \sum\limits_{j \in \mathcal{N}_i } {{ {\omega}}^{ij}_{k}(s)}  \psi _{s,t-1}^{j \rightarrow i}
		\end{equation}		
		
		\begin{equation}\label{code3:para_sum}
			{\Omega _i}(t) = {(H_s^i)^T}{(R_s^i)^{ - 1}}H_s^i + \sum\limits_{j \in \mathcal{N}_i } {{ {\omega}}^{ij}_{k}(s)}   \varphi _{s,t-1}^{j \rightarrow i}
		\end{equation}
		\STATE $\quad$ For each sensor $j\in \mathcal{N}_i$,
		\begin{equation}
			\begin{split}
				\psi _{s,t}^{j \rightarrow i}& = {\Theta _i }(t) - \psi _{s, t-1}^{i \rightarrow j}\\
				\varphi _{s,t}^{j \rightarrow i}& = {\Omega _i }(t) - \varphi _{s, t-1}^{i \rightarrow j}
			\end{split} 
		\end{equation}     	
		\ENDFOR
		\STATE \textbf{Step 3:} Calculate ${x}{_{k}^i}(s)$ and ${P}{_{k}^i}(s)$ by\\
		\begin{equation}\label{COV_delay}
			{\left[{P}_{k}^i(s)\right]^{ - 1}} = {\left[{P}_{k}^i(s|s-1)\right]^{ - 1}} + {\Omega _i }(d_{g})
		\end{equation}
		\begin{equation}\label{COV_state}
			\begin{split}
				{x}{_{k}^i}(s)  =  {P}{_{k}^i}(s) \bigg[ {{\left({P}_{k}^i(s|s-1)\right)}^{- 1}}{x}_{k}^i(s|s-1)
				+ {\Theta _i }(d_{g}) \bigg]
			\end{split}
		\end{equation}\\
		where 
		\begin{equation}\label{predict_delay_x}
			{x}_{k }^i(s|s-1) = \varPhi_{s-1}  {x}_{k }^i(s-1) 
		\end{equation}
		\begin{equation}\label{predict_delay_P}
			{P}_{k }^i(s|s-1) = \varPhi_{s-1}  {P}_{k }^i(s-1) {\varPhi_{s-1} ^T} + Q_{s-1}
		\end{equation}
		
	\end{algorithmic}
\end{algorithm}

Moreover, the vehicle-borne sensor network communication weight with delay ${{ {\omega}}^{ij}_{k}(s)}$ is denoted as:

\begin{equation}\label{commu_w_delay}
	{{{\omega}}^{ij}_{k}(s)} =
	\begin{cases}
		\gamma^{i,j}_{k}(s){\omega}_{ij},  & {i \neq j} ,\\
		1,  & {i=j}
	\end{cases}
\end{equation}

\begin{theorem}\label{theor_1}
	Considering the vehicle-borne sensor systems \eqref{system_model} and \eqref{measurement_equation}, assuming that the communication topology $\mathcal{G}$ is an undirected tree graph with diameter of $d_g$, the local estimate  ${x}{_{k}^i}(s)$ can be obtained by \eqref{COV_state} in Algorithm \ref{alg2:Framwork}.
\end{theorem}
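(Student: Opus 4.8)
The plan is to show that the message-passing recursion of Algorithm~\ref{alg2:Framwork} carries out an exact, double-counting-free aggregation of the local measurement information over the tree, so that after $d_g$ iterations every sensor holds the full network-wide information pair; the update \eqref{COV_state} is then recognized as the information form of the centralized measurement-update fusion, which is optimal in the minimum-variance sense.

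First I would fix the indices $k$ and $s$ and abbreviate the local contributions as $b_i := (H_s^i)^T (R_s^i)^{-1} y_s^i$ and $A_i := (H_s^i)^T (R_s^i)^{-1} H_s^i$, so that \eqref{code3:measure_sum}--\eqref{code3:para_sum} read $\Theta_i(t) = b_i + \sum_{j\in\mathcal{N}_i} \omega^{ij}_k(s)\,\psi_{s,t-1}^{j\to i}$ and $\Omega_i(t) = A_i + \sum_{j\in\mathcal{N}_i} \omega^{ij}_k(s)\,\varphi_{s,t-1}^{j\to i}$. The decisive structural feature is the echo-cancellation update $\psi_{s,t}^{j\to i} = \Theta_i(t) - \psi_{s,t-1}^{i\to j}$, by which the message forwarded along an edge is sensor $i$'s current aggregate with the contribution that previously arrived along that same edge subtracted off. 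On a tree, deleting the edge $(i,j)$ splits $\mathcal{G}$ into two components; let $T_{i\setminus j}$ denote the one containing $i$. Because paths are unique in a tree, the only route by which the information of nodes in $T_{i\setminus j}$ reaches $j$ is through this message, and the subtraction removes exactly what originated on $j$'s side, with no alternative cycle to re-inject it.

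The heart of the argument is an induction on the iteration index $t$: I would prove that the message crossing edge $(i,j)$ from the $i$-side equals the weighted sum of the $b_l$ over all nodes $l \in T_{i\setminus j}$ reachable from $i$ within $t$ hops (and likewise the $\varphi$-messages aggregate the $A_l$). The base case $t=1$ is the local initialization in Step~1 of Algorithm~\ref{alg2:Framwork}; the inductive step forms $\Theta_i(t)$ over all neighbors and then cancels the echo from $j$, which on a tree leaves precisely the contributions from $T_{i\setminus j}$ reachable within one more hop. Since $d_g$ is the diameter, every node lies within $d_g$ hops of $i$, so $\Omega_i(d_g)$ and $\Theta_i(d_g)$ recover the entire (weighted) network information pair $\sum_{l=1}^n A_l$ and $\sum_{l=1}^n b_l$, and this value is identical for every sensor $i$.

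Finally I would substitute these aggregates into \eqref{COV_delay} and \eqref{COV_state}. With \eqref{predict_delay_x}--\eqref{predict_delay_P} supplying the one-step prediction, the pair $[P_k^i(s)]^{-1} = [P_k^i(s|s-1)]^{-1} + \sum_l A_l$ and $x_k^i(s) = P_k^i(s)\big[(P_k^i(s|s-1))^{-1}x_k^i(s|s-1) + \sum_l b_l\big]$ is exactly the information-filter form of the Kalman measurement update that simultaneously fuses all $n$ sensors, which returns the optimal posterior of $x_s$ at each sensor; this establishes \eqref{COV_state}. I expect the main obstacle to be the inductive aggregation claim: one must argue that the echo cancellation strips each node's contribution once and only once, a property that rests essentially on the acyclicity and unique-path structure of the tree (it fails on a graph containing cycles) and on the edge weights $\omega^{ij}_k(s)$ along each path telescoping consistently. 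Keeping this weighting bookkeeping exact, so that the recovered aggregate matches the intended centralized fusion, is the most delicate part of the proof.
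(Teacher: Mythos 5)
Your proposal is correct in substance, but it is a genuinely different argument from the one in the paper. The paper's proof of Theorem~\ref{theor_1} works almost entirely on the Kalman-filter algebra side: it starts from the classical local update/prediction equations \eqref{classic_update_x}--\eqref{classic_predict_p}, applies the matrix inversion lemma (following the diffusion-KF literature) to rewrite them in information form \eqref{update_x_information}--\eqref{update_p_information}, then inserts the weighted aggregates $\Theta_i(d_g)$ and $\Omega_i(d_g)$ in place of the raw sums, and finally spells out the optimal matrix-weighted fusion \eqref{matrix_weighted} with the cross-covariances $P^{ij}_k$. Crucially, the paper's proof of this theorem simply \emph{takes as given} that $\Theta_i(d_g)$ and $\Omega_i(d_g)$ are the correct network-wide aggregates; the justification of that fact (via the layer-by-layer sub-tree argument and the identity \eqref{information_fusion}) is deferred to the proof of Theorem~\ref{algorithm_convergence}. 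Your proof inverts this emphasis: you prove the aggregation property rigorously, by induction on the iteration index $t$, exploiting echo cancellation and the unique-path property of trees, and then you \emph{assert} (rather than derive) that \eqref{COV_delay}--\eqref{COV_state} is the information form of the centralized measurement update. Your route buys a cleaner and more self-contained justification of the message-passing core, and it makes explicit exactly where acyclicity is used (your argument visibly fails on graphs with cycles, which the paper never points out); the paper's route buys the explicit link to the classical KF and the fusion-weight optimality. One caveat: your claim that the recovered aggregate equals $\sum_l A_l$, $\sum_l b_l$ and ``is identical for every sensor'' only holds when all $\omega^{ij}_k(s)=1$; with general communication weights each node's contribution enters scaled by products of edge weights along its path, so the aggregates are sensor-dependent --- this is precisely why the paper follows the local updates with the matrix-weighted fusion step \eqref{code2_fusion}, and you should weaken that sentence accordingly (you partly anticipate this in your closing remark on weight bookkeeping).
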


\begin{proof}\label{proof_of_theor1}
	For the classical local Kalman filter equation \cite{cattivelli2008diffusion,shen2012globally}, we have the following steps:
	
	Update:
	\begin{equation}\label{classic_update_x}
		x^{i}_{k} (s)= 	\hat{x}^{i}_{k}(s)+K^{i}_k (y^{i}_k - H^{i}_k \hat{x}^{i}_{k}(s))
	\end{equation}
	\begin{equation}\label{classic_update_p}
		P^{i}_{k}(s)=\hat{P}^{i}_{k}(s)-K^{i}_k  H^{i}_k \hat{P}^{i}_{k}(s)
	\end{equation}
	\begin{equation}\label{classic_update_k}
		K^{i}_k =  \hat{P}^{i}_{k}(s) (H^{i}_k)^T [H^{i}_k \hat{P}^{i}_{k}(s) (H^{i}_k)^T +R^{i}_k]^{-1}
	\end{equation} 
	
	Prediction:
	\begin{equation}\label{classic_predict_x}
		\hat{x}_{k+1}^{i}(s) = \varPhi_{k} x^{i}_{k}(s)
	\end{equation} 
	\begin{equation}\label{classic_predict_p}
		\hat{P}^{i}_{k+1}(s) = \varPhi_{k} P^{i}_{k}(s)\varPhi_{k}^T +Q_{k-1}
	\end{equation} 
	where $\hat{x}_{k+1}^{i}(s) $ and $\hat{P}^{i}_{k+1}(s)$ are priori state estimator and estimate error covariance, respectively.
	
	Inspired by \cite{cattivelli2010diffusion}, by using the matrix inversion lemma for \eqref{classic_update_x} and \eqref{classic_update_p}, an alternative form  can be obtained:
	\begin{equation}\label{update_x_information}
		x^i_{k}(s)= P^i_{k}(s)\left([\hat{P}^{i}_{k}(s)]^{-1} \hat{x}^i_{k}(s)+\sum_{i \in \mathcal{N}_i} {(H_k^i)^T}{(R_k^i)^{ - 1}}y_k^i    \right)
	\end{equation}
	\begin{equation}\label{update_p_information}
		[P^i_{k}(s)]^{-1} = [\hat{P}^{i}_{k}(s)]^{-1}+ \sum_{i \in \mathcal{N}_i} {(H_k^i)^T}{(R_k^i)^{ - 1}}H_k^i
	\end{equation}
	where the  definitions of $\hat{x}^i_{k}(s)$ and  $ \hat{P}^{i}_{k}(s) $ are consistent with \eqref{classic_predict_x} and \eqref{classic_predict_p}, respectively.
	
	In this paper, since the vehicle-borne sensor network communication weight is considered, the equation of \eqref{update_x_information}  further follows that:
	
	\begin{equation}
		x_{k}^i(s) = P_{k|k}^i\left( { [\hat{P}^{i}_{k}(s)]^{-1} \hat{x}^i_{k}(s) + {\Theta _i}(d_g)} \right)
	\end{equation}
	where ${\Theta _i}(d_g) =  \sum_{j \in {N_i}  } {{w}}_k^{i,j}(s)  \psi _{s,d_g-1}^{j \rightarrow i}$. After $d_g$ information transmissions, $\psi _{s,d_g-1}^{j \rightarrow i}$ is the result produced by $\psi _{s,0}^{j \rightarrow i}$, and $ \psi _{s,0}^{j \rightarrow i} = {(H_s^i)^T}{(R_s^i)^{ - 1}}y_s^i$.
	
	Similarly, by considering \eqref{update_p_information}, it  follows that:
	\begin{equation}
		{[P_{k}^i(s)]^{ - 1}} = [\hat{P}^{i}_{k}(s)]^{-1}  + {\Omega _i}(d)
	\end{equation}
	where ${\Omega _i}(d_g) = {(H_s^i)^T}{(R_s^i)^{ - 1}}H_s^i + \sum\limits_{j \in {N_i}} {w_k^{i,j}}(s) \varphi _{s,d_g-1}^{j \rightarrow i}$.
	
 The equation of \eqref{code2_fusion} is the optimal fusion step, where $\varGamma \in \mathbb{R}^{n_x n\times n_x}$ is the optimal matrix weight and satisfies
	\begin{equation}\label{matrix_weighted} 
		\varGamma= \varXi^{-1}e (e^T\varXi^{-1}e)^{-1} 
	\end{equation}
	where $\varXi = (P^{ij}_{k}) \in \mathbb{R}^{n_x n\times n_x n}$.  $P^{ij}_{k}$ is the local filtering  error cross-covariance between the $i$-th and $j$-th sensor. $e=[I_n,\dots, I_n]^T\in \mathbb{R}^{n_x n\times n_x}$. This value can be obtained by 
	\begin{equation}
		P^{ij}_{k}=({I_{n_x}} - K_{k}^i {H^i_s}) (\varPhi_{k-1} P^{ij}_{k-1}\varPhi_{k-1}^T +Q_{k-1})({I_{n_x}} - K_{k}^j {H^j_s})
	\end{equation}
	where $i,j \in \mathcal{V}$, the error variance of the optimal fusion estimate of state is $P^f_{k|k}=(e^T\varXi^{-1}e)^{-1} $, and $P^f_{k} \leq P^i_{k}$. 
	Proof and more details of $\varXi$ and $\varGamma $ can be obtained directly from \cite{sun2004multi}. 	
\end{proof}

\begin{remark}
	Since the vehicle runs relatively fast, it is convenient to achieve accurate tracking of the target vehicle in a short period of time. Due of the limited computer capability of the vehicle, this requires the algorithm has low complexity.
	It will be computationally burdensome due to the high dimension of matrix $\varXi^{-1}$. \eqref{matrix_weighted} is modified to the optimal vector weight, where $\varXi^{-1} =  (\tilde{P}^{ij}_{k}) \in \mathbb{R}^{n_x n\times n_x n}$ can be used, and $\tilde{P}^{ij}_{k}$ are the diagonal matrices consisting of the diagonal elements of the covariance matrices ${P}^{ij}_{k}$ can be used to reduce the amount of calculation and increase real-time.
	Similarly, the error variance of the optimal fusion estimate of state is $\tilde{P}^f_{k}= (e^T\varXi^{-1}e)^{-1}e^T \varXi^{-1}({P}^{ij}_{k})_{nl \times nl}\varXi^{-1}e(e^T\varXi^{-1}e)^{-1}$, $i,j =1,2,\dots,n$.
\end{remark}

\section{Stability analysis}\label{Stability analysis}
\subsection{Convergence analysis}

The convergence of Algorithm \ref{alg1:Framwork} is analyzed in this subsection. The proposed algorithm is shown to have finite-time convergence. This means that the target vehicle state  estimation error converges for a finite number of iterations

For Algorithm \ref{alg1:Framwork}, it has the following theorem:
\begin{theorem}\label{algorithm_convergence}
	Considering the vehicle-borne sensor systems \eqref{system_model} and \eqref{measurement_equation}, we assume that the graph $\mathcal{G}$ is a connected undirected tree graph. Algorithm \ref{alg1:Framwork} shows the DKF with anti-delay, which has the property of finite-time convergence.
\end{theorem}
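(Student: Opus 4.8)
The plan is to decompose finite-time convergence into two layers that combine cleanly: finite-time \emph{consensus} inside the inner loop (Algorithm \ref{alg2:Framwork}), and finite-time \emph{delay compensation} through the buffer, and then to invoke standard Kalman boundedness to conclude that once the full network information has been assembled, the estimation error stays bounded. Since Theorem \ref{theor_1} already identifies the local update \eqref{COV_delay}--\eqref{COV_state} with a centralized information-filter update driven by $\Theta_i(d_g)$ and $\Omega_i(d_g)$, the whole task reduces to showing that these aggregates become the exact network sums in a finite number of steps.

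First I would establish that the message-passing recursion in Algorithm \ref{alg2:Framwork} reaches \emph{exact} consensus in exactly $d_g$ iterations. The crucial structural fact is that $\mathcal{G}$ is a tree, so between any two nodes there is a unique path and there are no cycles. I would argue by induction on the iteration index $t$ that, after $t$ iterations, the quantity $\Omega_i(t)$ assembled at sensor $i$ via \eqref{code3:para_sum} equals the sum $\sum_{l}(H_s^l)^T(R_s^l)^{-1}H_s^l$ over all sensors $l$ whose graph distance to $i$ is at most $t$, and analogously for $\Theta_i(t)$ in \eqref{code3:measure_sum} with the measurement terms $(H_s^l)^T(R_s^l)^{-1}y_s^l$. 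The subtraction step $\psi_{s,t}^{j\rightarrow i}=\Theta_i(t)-\psi_{s,t-1}^{i\rightarrow j}$ (and its $\varphi$ counterpart) is precisely the device that removes the information $i$ previously forwarded to $j$, so no contribution is double-counted along the unique tree path; this is where acyclicity is used essentially. Since the diameter $d_g$ is the largest pairwise distance, after $t=d_g$ iterations every node has aggregated the full network sum, i.e.
\begin{equation}\nonumber
\Theta_i(d_g)=\sum_{l=1}^{n}(H_s^l)^T(R_s^l)^{-1}y_s^l, \qquad \Omega_i(d_g)=\sum_{l=1}^{n}(H_s^l)^T(R_s^l)^{-1}H_s^l
\end{equation}
for every $i$, which by Theorem \ref{theor_1} makes each local filter identical to the centralized Kalman filter after the finite number $d_g$ of inner iterations.

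Next I would treat delay compensation. By the buffer construction of Section \ref{Problem statement}, any packet transmitted at instant $s$ is guaranteed to have arrived by instant $s+d_t$, and the sliding-window re-run (the \textbf{for} loop over $s=k-d_t+1,\dots,k$, initialized by the carry-over steps \eqref{k-d_x} and \eqref{k-d_P}) replays the filter over the window $[k-d_t,k]$ once the buffer is complete. Thus every delayed measurement is incorporated within at most $d_t$ steps of its transmission, so the delay-induced information gap is closed in finite time. Combining the two layers, after a transient of length at most $d_t$ to fill the buffer, and $d_g$ inner iterations per step to reach consensus, the distributed estimate $x_k^i(s)$ coincides exactly with the delay-compensated centralized estimate. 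It then remains to argue that this centralized estimate is stable: I would invoke Assumption \ref{boundnessed} (the Gramian bound \eqref{completely_observable}) together with Assumption \ref{varPhi_initial_cov_invertible} and Lemma \ref{lemma1_matrix_invert_beta} to bound $\Omega_i(d_g)$ from above and below, keeping $[P_k^i(s)]^{-1}$ uniformly bounded so that $P_k^i(s)$ stays uniformly bounded and the error converges.

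I expect the main obstacle to be the exact finite-time consensus claim: the inductive bookkeeping of which information each message $\psi_{s,t}^{j\rightarrow i}$ and $\varphi_{s,t}^{j\rightarrow i}$ carries, and the verification that the subtraction step removes precisely the back-flow along each tree edge so that the aggregation is \emph{exact} (rather than merely asymptotic) and terminates at $t=d_g$. This step leans on acyclicity in an essential way, and it is exactly the point that the later strongly-connected-digraph results must handle by other means.
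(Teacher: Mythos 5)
Your proposal is correct and follows essentially the same route as the paper's own proof: the paper likewise argues that the tree structure (acyclicity plus the subtraction step $\psi_{s,t}^{j\rightarrow i}=\Theta_i(t)-\psi_{s,t-1}^{i\rightarrow j}$ preventing back-flow) lets the message passing assemble the full network information in $d_g$ rounds --- the paper phrases this via a decomposition into two sub-trees rooted at adjacent nodes $i,j$ with layer-by-layer aggregation and the fusion identity $\Theta_i(d_g)=\psi_{s,d_g}^{i\rightarrow j}+\psi_{s,d_g-1}^{j\rightarrow i}$, which is just a different bookkeeping of your induction on graph distance --- and then counts the buffered replay over the window $[k-d_t,k]$ to conclude convergence within on the order of $d_t d_g$ iterations. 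Your final paragraph on covariance boundedness via Assumptions \ref{boundnessed}, \ref{varPhi_initial_cov_invertible} and Lemma \ref{lemma1_matrix_invert_beta} is not part of the paper's proof of this theorem (the paper defers that to Theorem \ref{boundedness of error covariance}), but it is consistent with it.
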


\begin{proof}\label{proof_of_theorem1}
	The vehicle-borne sensor connection network graph $\mathcal{G}$ can be separated into two tree sub-graphs $\mathcal{G}_i$ and $\mathcal{G}_j$ connected by sensor nodes ($i$, $j$) because it is an undirected tree graph. The root nodes of the sub-graphs $\mathcal{G}_i$ and $\mathcal{G}_j$ are sensor nodes $i$ and $j$, respectively.
	
	The following procedures must be followed to achieve finite-time convergence of the sensor network's predicted goal state:
	
	\begin{enumerate}
		\item The number of hops from the sensor node to the root node $i$ is defined as the layers of the sensor node in sub-graph $\mathcal{G}_i$. 
		Node $i$ is referred to as a layer-0 node; nodes one hop away from the root node $i$ are referred to as layer-1 nodes, etc.
		\item When $t = 1$, from \eqref{code3:measure_sum}, \eqref{code3:para_sum} and the initial step of Algorithm \ref{alg1:Framwork}, only the information of layer-0 nodes in graph $\mathcal{G}_i$ is used to estimate the target state $x(s)$. The information from other sensor nodes in sub-graph $\mathcal{G}$ is not  used.
		\item After $d_g$ iterations,  $\psi _{s,d_g}^{i \rightarrow j}$ and $\varphi _{s,d_g}^{i \rightarrow j}$ with global information can be acquired utilizing sensor node information from layer-0 to layer-$d_g$ in the sub-graph $\mathcal{G}_i$, and then precise target state estimation can be produced.
		\item Similarly, in sub-graph $\mathcal{G}_j$, because the sensor node $i$ is adjacent to the node $j$, the sensor node information from layer-0 to layer-($d_{g}-1$) is used to obtain $\psi _{s,d_g-1}^{j \rightarrow i}$ and $\varphi _{s,d_g-1}^{j \rightarrow i}$ for the target state estimate after $d_{g}-1$ iterations.
		\item Fusing the information in sub-graph $\mathcal{G}_i$ and $\mathcal{G}_j$, yields
		\begin{equation}\label{information_fusion}
			\begin{split}
				{\Theta _i}(d_g)& = \psi _{s,d_g}^{i \rightarrow j}+ \psi _{s, d_g-1}^{j \rightarrow i}\\
				{\Omega _i}(d_g)& =  \varphi _{s,d_g}^{i \rightarrow j} + \varphi _{s, d_g-1}^{j \rightarrow i}
			\end{split} 
		\end{equation}  
		\item Algorithm \ref{alg1:Framwork} requires $k-s + 1$ iterations at each instant $k$ due to the presence of transmission delays. Based on  the definition of transmission time $s$, Algorithm \ref{alg1:Framwork}  needs a maximum of $d_t(k-s + 1)$ iterations in each instant $k$.
		\item Finally, by substituting ${\Omega _i}(d_g)$ and ${\Theta _i}(d_g)$ in \eqref{information_fusion} into \eqref{COV_delay} and \eqref{COV_state}, respectively, the estimation error covariance and the posterior estimate of the target state can be obtained. 
	\end{enumerate}

	Overall, according to this analysis, the global information of the vehicle-borne sensor network can be used to estimate the target state over a finite number of iterations. To complete finite-time convergence, at least $d_td_g$ iterations are required, which is also the minimum number of iterations needed to achieve finite-time convergence.
\end{proof}

\subsection{Boundedness of Error Covariances}\label{undirected_boundness}

The estimation error covariance $P_{k}^i(s)$ is bounded because the proposed DKF technique can achieve finite-time convergence.

The following theorem is used to demonstrate the boundedness of error covariance:
\begin{theorem}\label{boundedness of error covariance}
	Considering the finite-time DKF algorithm proposed in Algorithm \ref{alg1:Framwork}, when Assumptions \ref{boundnessed} and \ref{varPhi_initial_cov_invertible} exist, the communication graph $\mathcal{G}$ of the vehicle-borne sensor network is undirected and connected, and the delay $d_t$ is bounded. Then, there is a positive constant $\vartheta$, such that the estimate error covariance has $[ {P}_{k}^i(s)]^{-1} \geq \vartheta I$ holds for any $k \in  [1, +\infty]$ and $s \in D_t(k)$.
\end{theorem}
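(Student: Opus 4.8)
The plan is to establish the claim by lower-bounding the a posteriori information matrix $[P_k^i(s)]^{-1}$ \emph{uniformly} in $k$ and $s$, exploiting the fact that each local update effectively carries the network-wide measurement information, together with the uniform observability of Assumption \ref{boundnessed}. First I would invoke the finite-time convergence result (Theorem \ref{algorithm_convergence}): after the $d_g$ inner iterations of Algorithm \ref{alg2:Framwork}, the consensus term $\Omega_i(d_g)$ in \eqref{code3:para_sum} aggregates the measurement information of the \emph{entire} network. Since the weights in \eqref{commu_w_delay} obey $\omega^{ii}\equiv 1$ and $\omega^{ij}\geq 0$, there is a constant $c_0>0$ with $\Omega_i(d_g)\geq c_0\sum_{j=1}^{n}(H_s^j)^T(R_s^j)^{-1}H_s^j$, which is exactly the stacked term $\varPhi_s^T R_s^{-1}\varPhi_s$ appearing in the Gramian \eqref{completely_observable}. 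This is the crucial reduction, because a single sensor is generally unobservable, whereas the collective pair is observable, so only the network-wide information can yield a positive lower bound.

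Next I would combine the information update \eqref{COV_delay} with the prediction \eqref{predict_delay_P} into the recursion $[P_k^i(s)]^{-1}=\bigl(\varPhi_{s-1}P_k^i(s-1)\varPhi_{s-1}^T+Q_{s-1}\bigr)^{-1}+\Omega_i(d_g)$, and unroll it over one observability window of length $\bar n$. At each prediction step I apply Lemma \ref{lemma1_matrix_invert_beta}(1) with $\varPhi=\varPhi_{s-1}$ to replace $(\varPhi P\varPhi^T+Q)^{-1}$ by $\hat\gamma\,\varPhi^{-T}(\cdot)\varPhi^{-1}$; the invertibility and the bound $\varPhi_k^{-1}\geq\eta I$ of Assumption \ref{varPhi_initial_cov_invertible} keep the propagated factors from collapsing. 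Iterating the recursion then reproduces the transition products $O_{l,k}$ multiplying the measurement terms, so that the accumulated contribution matches, up to the constant $c_0\hat\gamma^{\bar n}$, the summand $O_{l,k}^T\varPhi_l^T R_l^{-1}\varPhi_l O_{l,k}$ of the Gramian $M_{k+\bar n,k}$. By Assumption \ref{boundnessed} this sum is bounded below by $\alpha I$, giving $[P_k^i(s)]^{-1}\geq c_0\hat\gamma^{\bar n}\alpha I=:\vartheta I$ once a full window has elapsed.

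To cover the early transient $k<\bar n$, where a full observability window is not yet available, I would use $[P_0]^{-1}\geq 0$ from Assumption \ref{varPhi_initial_cov_invertible}(2), so that the propagated initial information only adds and never subtracts, together with the enlarged-window estimate $\bar\alpha I\leq M_{k+\bar n+n-1,k}$ of Assumption \ref{boundnessed}; taking the minimum of the two resulting constants produces a $\vartheta$ valid for every $k\in[1,+\infty]$. Finally, for $s\in D_t(k)$ the buffer replay in Algorithm \ref{alg1:Framwork} performs at most $d_t+1$ FtDKF updates of the above form per instant; since $d_t$ is bounded, the same window argument applies verbatim inside each replay, so $\vartheta$ is uniform in both $k$ and $s$.

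I expect the propagation step to be the main obstacle: correctly tracking the constant $\hat\gamma$ through the repeated use of Lemma \ref{lemma1_matrix_invert_beta} while forcing the propagated measurement terms to line up with the exact $O_{l,k}^T\varPhi_l^T R_l^{-1}\varPhi_l O_{l,k}$ pattern of the Gramian, and, most delicately, discharging the precondition $\varOmega\leq\hat\varOmega$ of Lemma \ref{lemma1_matrix_invert_beta}(1) without circularity, since that precondition is itself an a priori covariance upper bound. Securing that upper bound (or equivalently arguing that the finite-time convergence already confines the covariance within the window) together with clean index bookkeeping over the delay buffer is where the real work lies.
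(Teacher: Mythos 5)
Your overall skeleton (information-form recursion, Lemma \ref{lemma1_matrix_invert_beta} at each prediction step, dropping the propagated initial information, closing with the Gramian bound of Assumption \ref{boundnessed}) is the same family of argument the paper uses, but your first step contains a genuine gap that the delays make fatal. You assert a uniform bound $\Omega_i(d_g)\geq c_0\sum_{j=1}^{n}(H_s^j)^T(R_s^j)^{-1}H_s^j$ for every replay step $s\in D_t(k)$. This is false precisely in the situation the theorem is about: by \eqref{commu_w_delay}, the weight entering $\Omega_i$ is $\omega^{ij}_{k}(s)=\gamma^{i,j}_{k}(s)\omega_{ij}$, and for $s$ close to $k$ the indicator $\gamma^{i,j}_{k}(s)$ is zero for every packet that has not yet arrived. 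At those steps the contributions of the corresponding sensors are not merely attenuated, they are \emph{absent}, and since a single sensor (or a strict subset) is in general unobservable, no constant $c_0>0$ uniform in $s$ can exist; your closing remark that ``the same window argument applies verbatim inside each replay'' breaks exactly where the delay matters. The only slot where full arrival is guaranteed by the bounded-delay assumption is the oldest one, $s=k-d_t$, and this is where the paper anchors its proof: it establishes the measurement-based bound \eqref{bounded_doundary} only at $s=k-d_t$ (where $\omega^{ij}_{k}(k-d_t)=\omega_{ij}$), and then treats every other $s\in D_t(k)\setminus\{k-d_t\}$ by prediction-only propagation, $[P_k^i(s)]^{-1}\geq\hat\gamma\,\varPhi_{s-1}^{-T}[P_k^i(s-1)]^{-1}\varPhi_{s-1}^{-1}$ with the measurement term discarded, which degrades the constant by $\hat\gamma^{\,s-k+d_t}\eta^{2(s-k+d_t)}$ but keeps it positive. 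Your proposal has no such fallback, so the bound for recent $s$ is unproven.

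A second, related difference: even at $s=k-d_t$ the paper does not rely on the inner consensus loop having collected the whole network's information in one shot. It lower-bounds $\Omega_i(d_g)$ by the one-hop neighbor sum only, and assembles network-wide observability through the \emph{time} recursion: iterating the one-step inequality over $Z=n+\bar n$ blocks of length $d_t$ produces weights $\omega_{ij}(\sigma)$, the entries of $\mathcal{W}^\sigma$, positive for all pairs on a connected undirected graph, so the accumulated sum covers all $j\in\mathcal{V}$ over a window of length $\bar n+n-1$ and the enlarged-window constant $\bar\alpha$ of Assumption \ref{boundnessed} closes the estimate. This unrolling necessarily crosses instant boundaries (via the carryover \eqref{k-d_x}--\eqref{k-d_P}, relating $P_{k}^i(k-d_t)$ to $P_{k-2d_t}^i(k-2d_t)$ and so on); it is not optional bookkeeping, because a window of length $\bar n$ inside a single replay simply does not exist when $\bar n>d_t$. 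Your concern about circularity in the precondition of Lemma \ref{lemma1_matrix_invert_beta}(1) is reasonable, but it is not a divergence from the paper, which applies the lemma the same way without first establishing a covariance upper bound.
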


\begin{proof}\label{proof_boundedness}
	The proof of the boundedness of estimate error covariance can be divided into two cases due to the existence of information transmission delays in a vehicle-borne sensor network, that is, $k \in \{1, \dots , (Z+1)d_t-1 \}$ and $k \in \{(Z+1)d_t, \dots , +\infty \}$, where $Z= n+\bar{n}$. In the initial stage, if the transmission delays $d_t$ is a bounded constant and Assumption \ref{varPhi_initial_cov_invertible}  exist, there is $[ {P}_{k }^i(s)]^{-1} \geq \vartheta I$ holds for any $k \in \{1, \dots , (Z+1)d_t-1 \}$ and $s \in D_t(k)$. Also, we only need to prove that $[ {P}_{k }^i(s)]^{-1} \geq \vartheta I$ holds for any $k \in \{(Z+1)d_t, \dots , +\infty\}$ and $s \in D_t(k)$.
	
	From \eqref{COV_delay} and $s=k-d_t$, we have:
	\begin{equation}\label{bounded_all}
		\begin{split}
			&{\left[ {P}_{k }^i(k-d_t)\right]^{ - 1}} = {\left[ \hat{P}_{k }^i(k-d_t|k-d_t-1)\right]^{ - 1}} \\
			& \qquad + \sum\limits_{j \in {N_i}} {{ {\omega}}^{ij}_{k}(k-d_t)} {(H_{k-d_t}^j)^T}{(R_{k-d_t}^j)^{ - 1}}H_{k-d_t}^j
		\end{split}
	\end{equation}
	It can be yielded from \eqref{predict_delay_P} and Lemma \ref{lemma1_matrix_invert_beta} that: 
	\begin{equation} \label{bounded_invertible_trans}
		\begin{split}
			&\left[	 \hat{P}_{k}^i(k-d_t|k-d_t-1)\right]^{-1} \geq \\ 
			& \qquad\hat{\gamma} \varPhi_{k-d_t-1}^{-T}  \left[ {P}_{k }^i(k-d_t-1)\right]^{-1}{\varPhi_{k-d_t-1} ^{-1}}
		\end{split}
	\end{equation}
	where $\hat{\gamma} \leq 1$. Substituting \eqref{bounded_invertible_trans} into \eqref{bounded_all} satisfies:
	\begin{equation}\label{bounded_k_to_k-1}
		\begin{split}
			&{\left[ {P}_{k}^i(k-d_t)\right]^{ - 1}} \geq \\
			&\qquad \hat{\gamma} \varPhi_{k-d_t-1}^{-T}  \left[ {P}_{k }^i(k-d_t-1)\right]^{-1}{\varPhi_{k-d_t-1} ^{-1}} +\\
			& \qquad \sum\limits_{j \in {N_i}} {{ {\omega}}_{ij}} {(H_{k-d_t}^j)^T}{(R_{k-d_t}^j)^{ - 1}}H_{k-d_t}^j
		\end{split}
	\end{equation}
	According to \eqref{bounded_k_to_k-1}, the relationship between $ {P}_{k }^i(k-d_t)$ and $ {P}_{k }^i(k-d_t-1)$ can be obtained. Therefore, after iteration $d_t$ times for \eqref{bounded_k_to_k-1}, we obtain:
	\begin{equation} \label{bounded_d_times}
		\begin{split}
			&{\left[ {P}_{k }^i(k-d_t)\right]^{ - 1}} \geq  
			\varpi \left(O_{k-d_t,k-2d_t} \right)^{-T}  \left[ {P}_{k-2d_t }^i(k-2d_t)\right]^{-1}\\  
			&\qquad {\left(O_{k-d_t,k-2d_t} \right) ^{-1}}+\sum\limits_{s=0}^{d_t-1} \sum\limits_{j \in {N_i}}\varpi {{ {\omega}}_{ij}}\left( O_{k-d_t,k-d_t-s} \right)^{-T}   \\   
			& \qquad \times {(H_{k-d_t-s}^j)^T}{(R_{k-d_t-s}^j)^{ - 1}}  
			H_{k-d_t-s}^j\left( O_{k-d_t,k-d_t-s} \right)^{-1} 
		\end{split}
	\end{equation}
	where $\varpi = \hat{\gamma}^{d_t}$. Also, iterating $Z$ times for \eqref{bounded_d_times}, it follows that:
	\begin{equation} \label{bounded_L_times}
		\begin{split}
			&{\left[ {P}_{k }^i(k-d_t)\right]^{ - 1}} \geq 
			\varpi^{L} \left[O_{k-d_t, k-d_t-Zd_t} \right]^{-T} \\   
			&\qquad \times\left[ {P}_{k-Zd_t }^i(k-d_t-Zd_t)\right]^{-1}  
			{\left(O_{k-d_t, k-d_t-Zd_t} \right) ^{-1}} \\ 
			&\qquad +\sum\limits_{\sigma=1}^{Z} \sum\limits_{s=0}^{d_t-1} \sum\limits_{j \in \mathcal{V}} \varpi^{\sigma} {{ {\omega}}_{ij}(\sigma)}\left(O_{k-d_t,k-\sigma d_t-s} \right)^{-T} \\   
			& \qquad \times {(H_{k-\sigma d_t-s}^j)^T}{(R_{k-\sigma d_t-s}^j)^{ - 1}}  H_{k-\sigma d_t-s}^j \\ 
			& \qquad \times\left(O_{k-d_t,k-\sigma d_t-s} \right)^{-1} 
		\end{split}
	\end{equation}
	
	Because ${{ {\omega}}_{ij}(\sigma)}$ is the element of matrix $\mathcal{W}^{\sigma}$, and the topology of sensor network is undirected graph, thus,  it has ${{ {\omega}}_{ij}(\sigma)} >0 $ for any $\sigma \geq 1$. Therefore, \eqref{bounded_L_times} can be rewritten as:
	\begin{equation} \label{bounded_delete_cov}
		\begin{split}
			&{\left[ {P}_{k }^i(k-d_t)\right]^{ - 1}} \geq  
			{\tilde{\omega}}_{\text{min}} \varpi^{Z} \sum\limits_{\sigma=1}^{Z} \sum\limits_{s=0}^{d_t-1} \sum\limits_{j \in \mathcal{V}}\left(O_{k-d_t,k-\sigma d_t-s} \right)^{-T}\\ &\qquad{(H_{k-\sigma d_t-s}^j)^T} 
			{(R_{k-\sigma d_t-s}^j)^{ - 1}} H_{k-\sigma d_t-s}^j\left(O_{k-d_t,k-\sigma d_t-s} \right)^{-1} 
		\end{split}
	\end{equation}
	where ${\tilde{\omega}}_{\text{min}} = \text{min}\{ {{ {\omega}}_{ij}(\sigma)} \}$ and $\sigma\geq1$, $i,j \in \mathcal{V}$. Defining $k_z= k-Zd_t-d_t+1$, \eqref{bounded_delete_cov} follows that:
	
	\begin{equation} \label{bounded_kl}
		\begin{split}
			&{\left[ {P}_{k }^i(k-d_t)\right]^{ - 1}}\\
			&\geq 
			{\tilde{\omega}}_{\text{min}} \varpi^{Z}  
			\left(O_{k-d_t,k_z} \right)^{-T} 
			\sum\limits_{s=k_z}^{k_z+\bar{n}+n-1} \sum\limits_{j \in \mathcal{V}}\left(O_{s,k_z} \right)^{-T}\\
			&\qquad \times{(H_{s}^j)^T} {(R_{s}^j)^{ - 1}} H_{s}^j\left(O_{s,k_z} \right)^{-1} \left(O_{k-d_t,k_z} \right)^{-1} 
		\end{split}
	\end{equation}
	
	According to \eqref{completely_observable}, the following equation is satisfied:
	
	\begin{equation} \label{bounded_found_the_doundary}
		\begin{split}
			{\left[ {P}_{k }^i(k-d_t)\right]^{ - 1}} \geq  
			{\tilde{\omega}}_{\text{min}} \varpi^{Z}  \bar{\alpha}
			\left(O_{k-d_t,k_z} \right)^{-T} 
			\left(O_{k-d_t,k_z} \right)^{-1} 
		\end{split}
	\end{equation}

	From Assumption \ref{varPhi_initial_cov_invertible}, and $[\varPhi_{k}]^{-1} \geq \eta I$ is invertible for any $k \in N^{+}$, \eqref{bounded_found_the_doundary} can be rewritten as:
	\begin{equation} \label{bounded_doundary}
		{\left[ {P}_{k }^i(k-d_t)\right]^{ - 1}} \geq  
		{\tilde{\omega}}_{\text{min}} \varpi^{Z}  \bar{\alpha}
		\eta^{2(Zd_t-1)}
	\end{equation}
	When $k \in \{(Z+1)d_t, \dots , +\infty\}$ and $s \in D_t(k) \backslash \{k-d_t \}$, according to \eqref{COV_delay}, \eqref{predict_delay_P}, and Lemma \ref{lemma1_matrix_invert_beta}, we have:
	\begin{equation} \label{bounded_s}
		\begin{split}
			{\left[ {P}_{k }^i(s)\right]^{ - 1}} \geq 
			\hat{\gamma} [\varPhi_{s-1}]^{-T}  \left[ {P}_{k }^i(s-1)\right]^{-1}{[\varPhi_{s-1} ]^{-1}} 
		\end{split}
	\end{equation}
	Similar to previous steps, after iterating the equation \eqref{bounded_s} for $s-k+d_t$ times, it follows that:
	\begin{equation} \label{bounded_s-k+d_t}
		\begin{split}
			{\left[ {P}_{k }^i(s)\right]^{ - 1}} \geq &
			\hat{\gamma}^{s-k+d_t} \left(O_{s,k-d_t} \right)^{-T}    \left[ {P}_{k }^i(k-d_t)\right]^{-1}\\
			& \times{\left(O_{s,k-d_t} \right)^{-1}} 
		\end{split}
	\end{equation}
	Combining \eqref{bounded_doundary} and \eqref{bounded_s-k+d_t}, yields:
	\begin{equation} \label{bounded_D_T(k)}
		\begin{split}
			{\left[ {P}_{k }^i(s)\right]^{ - 1}} &\geq 
			\hat{\gamma} ^{s-k+d_t}{\tilde{\omega}}_{\text{min}} \varpi^{Z}  \bar{\alpha}
			\eta^{2(Zd_t-1)} \left(O_{s,k-d_t} \right)^{-T} \\		   
			&\quad \times{\left(O_{s,k-d_t} \right)^{-1}} \\
			&\geq  {\tilde{\omega}}_{\text{min}} \varpi^{Z}  \bar{\alpha} \hat{\gamma} ^{s-k+d_t} \eta^{2(Zd_t-1)+ 2(s-k+d_t)} \\
			&> 0
		\end{split}
	\end{equation}
	where $k \in \{(Z+1)d_t, \dots , +\infty\}$ and $s \in D_t(k)$. Combined with the study mentioned above, the error covariance is shown to be bounded, that is, ${[ {P}_{k }^i(s)]^{ - 1}} \geq \text{max}\{\vartheta, {\tilde{\omega}}_{\text{min}} \varpi^{Z}  \bar{\alpha} \hat{\gamma} ^{s-k+d_t} \eta^{2(Zd_t-1)+ 2(s-k+d_t)}\}$, for any $k \in  [1, +\infty)$ and $s \in D_t(k)$. This proof is  thus complete.
\end{proof}

\begin{remark}\label{bound_of_delay}
	When filtering accuracy is assured in a vehicle-borne sensor network, we are more concerned with the real permissible delay boundary. Thus, we may reversibly deduce the boundary of delay under the corresponding error covariance using the proof of boundedness of error covariance in the previous section.
	
	From \eqref{bounded_D_T(k)}, we obtain
	\begin{equation}\label{cov_bound}
		{\left[ {P}_{k }^i(s)\right]^{ - 1}} \geq  {\tilde{\omega}}_{\text{min}} \varpi^{Z}  \bar{\alpha} \hat{\gamma} ^{s-k+d_t} \eta^{2(Zd_t-1)+ 2(s-k+d_t)}
	\end{equation}
	Taking the logarithm of both sides of \eqref{cov_bound},  we have
	\begin{equation}\label{delay_bound_process}
		\begin{split}
			\text{In} & \left\{ {\left[ {P}_{k }^i(s)\right]^{ - 1}}  \right\}  \geq \text{In} \left({\tilde{\omega}}_{\text{min}} \varpi^{Z} \bar{\alpha} \right)+(s-k+d_t) \text{In} \hat{\gamma}\\
			& + \big[2(Zd_t-1)+ 2(s-k+d_t)\big] \text{In} \eta
		\end{split}
	\end{equation}
	After sorting out \eqref{delay_bound_process}, the transmission delay boundary is
	\begin{equation}\label{delay_bound}
		d_t \leq\log _{\hat{\gamma} \eta^{2(Z+1)} } \left[  \hat{\gamma} ^{k-s} 
		\eta^{2(k-s+1)}\left({ {P}_{k }^i(s) }{\tilde{\omega}}_{\text{min}} \varpi^{Z}  \bar{\alpha} \right)^{-1} \right]		
	\end{equation}
	
\end{remark}

\subsection{Anti-delay FtDKF algorithm with directed graph}
The proposed filtering algorithm and its demonstration were built on an undirected network in the previous subsection. 
Considering that it is difficult to guarantee ideal undirected communication topology condition in complex traffic environment, the communication topology condition is reduced to directed graph.
The finite-time DKF with transmission delays based on a directed graph is discussed in this subsection to reduce the vehicle-borne sensor communication burden and improve the algorithm's usability in vehicle network.

\begin{definition}\cite{12}\label{directed_graph_definition}
	A directed graph $\mathcal{G}$ is strongly connected if and only if a directed path exists from every sensor node to every other sensor  node. The directed graph of the sensor network diameter is $\bar{d}_g$.  
\end{definition}

\begin{lemma}\cite{he2018consistent}\label{postive_graph_define}
	If the directed graph $\mathcal{G}=(\mathcal{V},\mathcal{E},\mathcal{W})$ is strongly connected for $\mathcal{V} =\{1,2,\dots, n \}$, then all elements of $\mathcal{W}^{s}$, $s \geq n-1$ are positive.
\end{lemma}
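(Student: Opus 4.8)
The plan is to prove the lemma through the combinatorial interpretation of matrix powers as weighted directed walks, which converts a statement about the algebra of $\mathcal{W}^s$ into a purely graph-theoretic reachability question. First I would recall that, because $\mathcal{W}=[\omega_{ij}]$ is entrywise nonnegative, the $(i,j)$ entry of the $s$-th power admits the expansion
\begin{equation}\nonumber
	[\mathcal{W}^s]_{ij} = \sum_{i=v_0,v_1,\dots,v_s=j} \; \prod_{r=0}^{s-1} \omega_{v_r v_{r+1}},
\end{equation}
where the sum runs over all ordered $(s{+}1)$-tuples of nodes, i.e., over all directed walks of length exactly $s$ from $i$ to $j$. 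Since every summand is nonnegative, $[\mathcal{W}^s]_{ij}>0$ holds if and only if there exists at least one directed walk of length exactly $s$ from $i$ to $j$ all of whose edges carry strictly positive weight. Thus the lemma reduces to showing that, for every ordered pair $(i,j)$ and every $s\geq n-1$, such a positive-weight walk of length precisely $s$ exists.

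Next I would establish existence of walks of the exact length $s$ in two steps. By Definition \ref{directed_graph_definition}, strong connectivity guarantees a directed path $i=v_0\to v_1\to\cdots\to v_\ell=j$ through distinct nodes, hence of length $\ell\leq n-1$, with every edge weight $\omega_{v_r v_{r+1}}>0$. To stretch this path to the prescribed length $s$, I would exploit the self-loops built into the weight model---recall from \eqref{commu_w_delay} that $\omega_{ii}=1>0$ for every node---and prepend $s-\ell$ repetitions of the self-loop at $i$ before traversing the path. Because $s\geq n-1\geq \ell$, the number of padding steps $s-\ell$ is nonnegative, so this construction yields a bona fide walk of length exactly $s$ whose every edge has positive weight. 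Its positive contribution forces $[\mathcal{W}^s]_{ij}>0$, and since $(i,j)$ was arbitrary, $\mathcal{W}^s>0$ for all $s\geq n-1$.

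The step I expect to be the main obstacle is exactly the matching of walk length, which is where aperiodicity enters and where the bound $n-1$ is sharp. Strong connectivity alone delivers walks of length \emph{at most} $n-1$; it cannot by itself produce walks of an arbitrary \emph{exact} length, as the directed cycle on $n$ nodes shows, where $\mathcal{W}^s$ is a permutation matrix for every $s$ and is never entrywise positive. The positive diagonal (self-loops at all nodes) is therefore indispensable: it renders $\mathcal{W}$ primitive with primitivity index at most $n-1$, which is the Frobenius exponent bound $2n-d-1$ specialized to $d=n$ positive diagonal entries. Care must be taken to keep the self-loop hypothesis explicit, since the bare definition of a strongly connected digraph does not guarantee it; once it is in force, the walk-padding argument above is both necessary and sufficient for the claimed positivity.
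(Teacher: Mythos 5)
There is no in-paper proof to compare against: Lemma \ref{postive_graph_define} is imported from \cite{he2018consistent} and stated without proof, so your argument supplies the missing justification rather than an alternative to one. What you give is the standard primitivity argument --- expand $[\mathcal{W}^s]_{ij}$ as a sum over directed walks of length exactly $s$, extract from strong connectivity a positive-weight path of length $\ell\leq n-1$, and pad it to length exactly $s$ with self-loops --- and it is correct. Your closing observation is also the mathematically substantive one: strong connectivity alone does not imply the conclusion (for the directed $n$-cycle, $\mathcal{W}^s$ is a permutation matrix for every $s$ and is never entrywise positive), so an aperiodicity hypothesis such as a positive diagonal is indispensable, and with all $n$ diagonal entries positive the exponent bound $n-1$ you quote is sharp (on the cycle-plus-self-loops graph, $[\mathcal{W}^s]_{1n}=0$ for all $s<n-1$). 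The one point to tighten is where that hypothesis comes from. Equation \eqref{commu_w_delay} sets the diagonal of the \emph{algorithmic} weights $\omega^{ij}_k(s)$ to one, but the lemma is a statement about the graph matrix $\mathcal{W}$ itself, and the definition of $\mathcal{G}$ in Section \ref{Problem statement} only requires $\omega_{ij}\geq 0$ with no condition on $\omega_{ii}$. So you should state $\omega_{ii}>0$ for all $i$ as an explicit hypothesis of the lemma --- or, equivalently, prove it for the effective weight matrix that the paper actually iterates in Remark \ref{directed_cov_boundess}, whose diagonal is one by \eqref{commu_w_delay} --- rather than read the self-loops off \eqref{commu_w_delay} as if that equation constrained $\mathcal{W}$. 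With that hypothesis made explicit, your proof is complete, and it also documents precisely the correction that the cited statement needs.
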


\begin{theorem}\label{algorithm_convergence_with_delay_directed}
	Considering the vehicle-borne sensor systems \eqref{system_model} and \eqref{measurement_equation}, assuming that the graph $\mathcal{G}$ is a directed graph, and satisfies Definition \ref{directed_graph_definition}.
	Algorithm \ref{alg1:Framwork} shows the anti-delay DKF, which has the property of finite-time convergence.
\end{theorem}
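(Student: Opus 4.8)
The plan is to follow the architecture of the proof of Theorem \ref{algorithm_convergence}, replacing the tree-layer argument with one based on strong connectivity. The essential change is that on a directed graph we can no longer split $\mathcal{G}$ into two subtrees rooted at an edge $(i,j)$; instead, information must be shown to circulate through directed paths so that each sensor eventually aggregates the global information $\sum_{j\in\mathcal{V}} (H_s^j)^T (R_s^j)^{-1} H_s^j$ that the centralized filter uses in \eqref{COV_delay} and \eqref{COV_state}.

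First I would fix an instant $k$ and a transmission time $s$, and track the set of nodes whose local information $(H_s^j)^T(R_s^j)^{-1}y_s^j$ and $(H_s^j)^T(R_s^j)^{-1}H_s^j$ has been incorporated into $\Theta_i(t)$ and $\Omega_i(t)$ after $t$ rounds of the recursion \eqref{code3:measure_sum}--\eqref{code3:para_sum}. By Definition \ref{directed_graph_definition} the digraph is strongly connected with diameter $\bar{d}_g$, so a directed path of length at most $\bar{d}_g$ exists from every node $j$ to every node $i$. I would argue inductively that after $t$ rounds the aggregate at sensor $i$ contains the contributions of every node within $t$ directed hops, and hence after $t=\bar{d}_g$ rounds it contains the contribution of every node in $\mathcal{V}$.

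Second, to make the ``after $\bar{d}_g$ rounds'' statement exact rather than merely asymptotic, I would invoke Lemma \ref{postive_graph_define}: since $\mathcal{W}^{s}$ has strictly positive entries for $s\geq n-1$, the cumulative weight carried along the directed paths from $j$ to $i$ is nonzero, so no node's contribution is annihilated. This upgrades reachability into the positivity needed to guarantee that $\Omega_i(\bar{d}_g)$ equals the full global information matrix, so substituting into \eqref{COV_delay}--\eqref{COV_state} reproduces the centralized estimate exactly in finitely many rounds. Combining this with the delay/buffer bookkeeping of Theorem \ref{algorithm_convergence} --- at each instant the buffer replays the delayed packets so that at most $d_t$ replay passes are needed for each of the $\bar{d}_g$ propagation rounds --- yields a finite-time bound of order $d_t\bar{d}_g$.

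The hard part will be the exact global-aggregation step on a digraph that contains cycles. On the tree the cancellation $\psi_{s,t}^{j\rightarrow i}=\Theta_i(t)-\psi_{s,t-1}^{i\rightarrow j}$ prevents each piece of information from being counted twice, because every pair of nodes is joined by a unique path; once cycles are present, the same message can return to a node along several routes, so I must show that the subtraction scheme together with the weights $\omega^{ij}_k(s)$ still sums each local term exactly once (equivalently, that the construction realizes a spanning-tree flooding inside the strongly connected digraph). Verifying this non-double-counting property, and pinning down the precise round count $\bar{d}_g$ at which the aggregate stabilizes, is the step I expect to require the most care.
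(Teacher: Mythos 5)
Your route is, at its core, the same one the paper takes: inherit the undirected-tree argument of Theorem \ref{algorithm_convergence}, replace the layer-by-layer propagation on the tree with hop-by-hop propagation along directed paths, let the digraph diameter $\bar{d}_g$ play the role of $d_g$, and fold in the buffer/delay bookkeeping to arrive at $\bar{d}_g d_t$ iterations. In fact the paper's own proof is considerably thinner than your plan: it simply asserts that convergence ``still'' follows by information transmission, that the number of transfers is governed by $\bar{d}_g$, and concludes that at least $\bar{d}_g d_t$ iterations are needed. Two of your refinements go beyond (or past) what the paper does. First, the paper does not invoke Lemma \ref{postive_graph_define} in this proof at all; it reserves it for Remark \ref{directed_cov_boundess}, where the positivity $\omega_{ij}(\sigma)>0$ for $\sigma \geq n$ is what makes the covariance bound \eqref{bounded_delete_cov_directed} go through. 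Note also that the lemma guarantees positivity of the entries of $\mathcal{W}^{s}$ only for $s \geq n-1$, whereas $\bar{d}_g$ can be strictly smaller than $n-1$, so the lemma cannot by itself pin the stabilization round at exactly $\bar{d}_g$ as your second step wants. Second, your goal of showing that $\Omega_i(\bar{d}_g)$ \emph{equals} the full centralized information matrix is stronger than what the algorithm produces or the theorem claims: with the communication weights $\omega^{ij}_{k}(s)$ of \eqref{commu_w_delay}, the aggregate is a weighted combination, and ``finite-time convergence'' in this paper means that the iteration incorporates global information and stabilizes after finitely many rounds, not that it reproduces the centralized estimate exactly.

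The non-double-counting problem you single out as the hard step is a genuine one: the cancellation $\psi_{s,t}^{j\rightarrow i}=\Theta_i(t)-\psi_{s,t-1}^{i\rightarrow j}$ is justified on a tree by uniqueness of paths, and on a strongly connected digraph with cycles (where, moreover, a link may be unidirectional, so the subtracted term $\psi_{s,t-1}^{i\rightarrow j}$ need not correspond to anything sensor $i$ actually sent to $j$) there is no obvious reason the scheme counts each local term exactly once. But you should be aware that the paper's proof never confronts this issue: it contains no analogue of your hop-induction and no treatment of cycles or of repeated arrivals. So this is a gap in the published argument itself, not a deficiency of your plan relative to it; if you carried out the verification you describe (for instance, by showing the recursion realizes flooding on a spanning arborescence of the digraph, or by modifying the update so that repeat arrivals are explicitly discarded via the time stamps), your proof would be strictly more rigorous than the one in the paper.
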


\begin{proof}\label{proof_of_alg2_directed}
		According to the proof of Algorithms \ref{alg1:Framwork}, the  estimated state convergence of node $i$ can be completed by transmitting the neighbor node ($\mathcal{N}_i$) information to node $i$.
	Similarly, when the communication topology of sensor network is a directed graph, it still needs to complete finite-time convergence through information transmission.  The number of information transfers is also related to the maximum number of hops between two nodes (i.e., the diameter of the directed graph $\bar{d}_g$).
	
	Combined with the previous convergence proof of the undirected graph, when the directed graph is applied in Algorithm \ref{alg1:Framwork}, at least $\bar{d}_g d_t$ iterations are needed to achieve finite-time convergence, where $d_t$ is the maximum transmission delay.
\end{proof}

\begin{remark}\label{directed_cov_boundess}
	When the sensor network is a directed graph, the boundedness of estimation error covariance needs to be further explored. According to the proof in subsection \ref{undirected_boundness} and Lemma \ref{postive_graph_define}, it has ${{ {\omega}}_{ij}(\sigma)} >0 $ for any $\sigma \geq n$. Therefore, for \eqref{bounded_L_times}, we have:
	
	\begin{equation} \label{bounded_delete_cov_directed}
		\begin{split}
			&{\left[ {P}_{k }^i(k-d_t)\right]^{ - 1}} \geq  
			{\tilde{\omega}}_{\text{min}} \varpi^{Z} \sum\limits_{\sigma=n}^{Z} \sum\limits_{s=0}^{d_t-1} \sum\limits_{j \in \mathcal{V}}\left(O_{k-d_t,k-\sigma d_t-s} \right)^{-T}\\ &\qquad{(H_{k-\sigma d_t-s}^j)^T} 
			{(R_{k-\sigma d_t-s}^j)^{ - 1}} H_{k-\sigma d_t-s}^j\left(O_{k-d_t,k-\sigma d_t-s} \right)^{-1} 
		\end{split}
	\end{equation}
	where $Z = n+ \bar{n}$ and ${\tilde{\omega}}_{\text{min}} = \text{min}\{ {{ {\omega}}_{ij}(\sigma)} \}$ and $\sigma>n$, $i,j \in \mathcal{V}$. 
	\begin{equation} \label{bounded_kl_directed}
		\begin{split}
			{\left[ {P}_{k }^i(k-d_t)\right]^{ - 1}} &\geq 
			{\tilde{\omega}}_{\text{min}} \varpi^{Z}  
			\left(O_{k-d_t,k_z} \right)^{-T} 
			\sum\limits_{s=k_z}^{k_z+\bar{n}} \sum\limits_{j \in \mathcal{V}}\left(O_{s,k_z} \right)^{-T}\\
			&\times{(H_{s}^j)^T} {(R_{s}^j)^{ - 1}} H_{s}^j\left(O_{s,k_z} \right)^{-1} \left(O_{k-d_t,k_z} \right)^{-1} \\
			&\geq {\tilde{\omega}}_{\text{min}} \varpi^{Z}  \alpha
			\left(O_{k-d_t,k_z} \right)^{-T} 
			\left(O_{k-d_t,k_z} \right)^{-1} 
		\end{split}
	\end{equation}
	The initial number of $\sigma$ is the major distinction between proofs of directed and undirected graphs. Thus, the error covariance boundary value is explicitly given as:
	
	\begin{equation} \label{bounded_D_T(k)_directed}
		\begin{split}
			{\left[ {P}_{k }^i(s)\right]^{ - 1}} &\geq  {\tilde{\omega}}_{\text{min}} \varpi^{Z}  \alpha \hat{\gamma} ^{s-k+d_t} \eta^{2(Zd_t-1)+ 2(s-k+d_t)} \\
			&> 0
		\end{split}
	\end{equation}  
	Since $\bar{\alpha} > \alpha$, the bound of error covariance in \eqref{bounded_D_T(k)_directed} is greater than \eqref{bounded_D_T(k)}. 
\end{remark}

\begin{remark}\label{main_application}
Based on this analysis, the anti-delay DKF algorithm suggested in this paper can be applied to undirected and directed graphs.
Because the real vehicle information fusion environment is diverse and unknown, the suggested algorithm offers a broader range of applications in vehicle network.
\end{remark}
\begin{figure*}[t]
	\centering
	\includegraphics[width=6in]{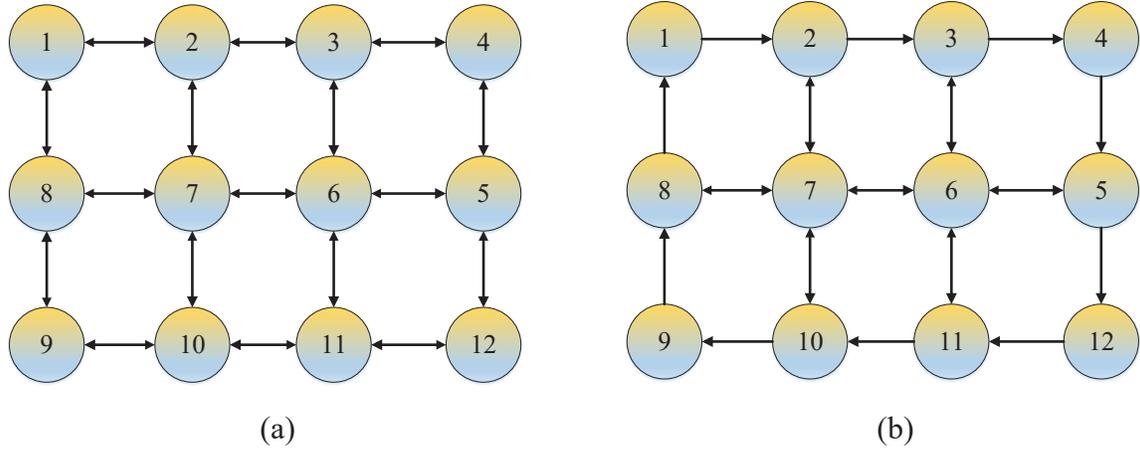}
	\caption{Structure of the sensor network. (a) Case 1: undirected graph (b) Case 2: directed graph}
	\label{fig_network}
\end{figure*}

\section{Simulations and experiment}\label{Numerical experiments}

The discrete-time system can be stated as follows for tracking systems with twelve sensors:
\begin{equation}\label{Numerical_system_model}
	x_{k+1} = 	
	\begin{bmatrix}
		 & 1 &T  & T^2/2    \\
		& 0 & 1 & T \\
		& 0 & 0 & 1 \\
	\end{bmatrix} \ \  x_{k} + w_{k}
\end{equation}

where $T =0.01$ s is the sampling period and the target state $x_{k} = [s_{k}, \dot{s}_{k}, \ddot{s}_{k}]^T$, and $s_{k}, \dot{s}_{k}, \ddot{s}_{k}$ are the position, velocity and acceleration of the target, respectively. 
\begin{equation}\label{sensors_equation}
	y_{k}^{i} = {H_{k}^{i}} x_{k} + v_{k}^{i}, \qquad i = 1, 2, \dots, 12
\end{equation}
where ${H^1}=[1,0,0]$, ${H^2}=[0,1,0]$, ${H^3}=[0,0,1]$, ${H^4}=[1,0,0]$, ${H^5}=[0,1,0]$, ${H^6}=[0,0,1]$, ${H^7}=[1,0,0]$, ${H^8}=[0,1,0]$, ${H^9}=[0,0,1]$, ${H^{10}}=[1,0,0]$, ${H^{11}}=[0,1,0]$ and ${H^{12}}=[0,0,1]$; $Q=1$ and $R_i= \textrm{diag}[0.8, 1.0, 2.0, 1.0, 0.5, 1.5, 1.0, 1.0, 0.1, 1.0, 1.5, 1.0]$ for any instant. The initial values of state are $\mu_0= [0, \dots, 0]^T$ and $P_0= I_{12}$, respectively. The communication network between sensors is shown in Fig \ref{fig_network}. The elements of the weighted matrix $\mathcal{W}$ are $\omega_{ij}= 1/| \mathcal{N}_i|$, where $j\in \{ \mathcal{N}_i\}$. The time-varying transmission delays $d^t_{ij}(k)$ satisfy $\{d^t_{ij}(k)|d^t_{ij}(k) \in \{0,1,\dots,d_{t}\}$, let $d_{t}=$4. The probability of occurrence of each delay value is the same as $1\backslash (d_{t}+1)=$ 0.2.

The simulation is divided into two parts.  First,  the tracking effects of the proposed Algorithm \ref{alg1:Framwork} and that from  \cite{2020Distributed}  are compared in the case of the same target motion.
Second, the adaptability of Algorithm \ref{alg1:Framwork} in undirected and directed graphs is verified, and robustness for different transmission delay bounds is verified.

\subsection{Accurate tracking effect of Algorithm \ref{alg1:Framwork} on target}

Transmission delays are unavoidable in a vehicle-borne sensor network due to sensor performance discrepancies and external interference.
Thus, the proposed Algorithm \ref{alg1:Framwork} takes the transmission delays of the sensor network into account.

The tracking effect comparison between Algorithm \ref{alg1:Framwork} and Reference \cite{2020Distributed} is provided in Fig. \ref{fig_alg2_delay} to emphasize the proposed algorithm's great performance when the bound of transmission delays $d_t$.
Under the same initial conditions, both Algorithm \ref{alg1:Framwork} and Reference \cite{2020Distributed} can track the target. However, careful examination reveals that the suggested Algorithm \ref{alg1:Framwork}'s tracking accuracy is better than that of \cite{2020Distributed}. As a result, Algorithm \ref{alg1:Framwork} performs better in terms of tracking.

Correspondingly, Fig. \ref{fig_alg2_compare_mse} shows the MSE of Algorithm \ref{alg1:Framwork} and Reference \cite{2020Distributed}. Consistent with the above analysis, because Algorithm \ref{alg1:Framwork} achieves more accurate tracking performance, the MSE of Algorithm \ref{alg1:Framwork} is less than that of \cite{2020Distributed} in terms of position, velocity and acceleration.

\begin{figure}[!t]
	\centering
	\includegraphics[width=3.8in]{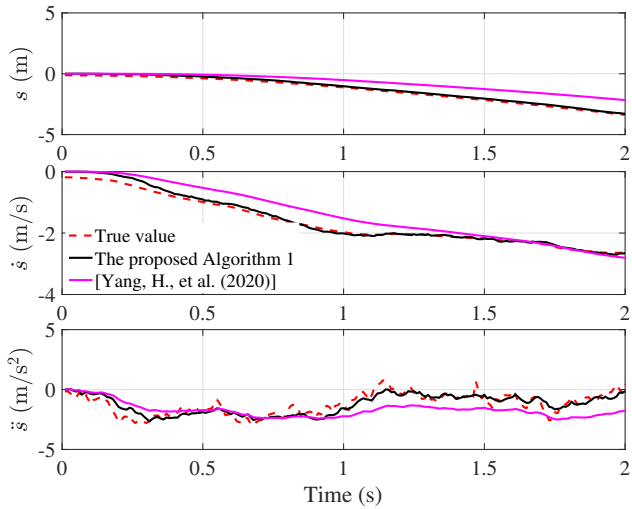}
	\caption{Comparison of the tracking effect between Algorithm \ref{alg1:Framwork} and Reference \cite{2020Distributed}  with transmission delay $d_t =3$ for case 1.}
	\label{fig_alg2_delay}
\end{figure}

\begin{figure}[!t]
	\centering
	\includegraphics[width=3.8in]{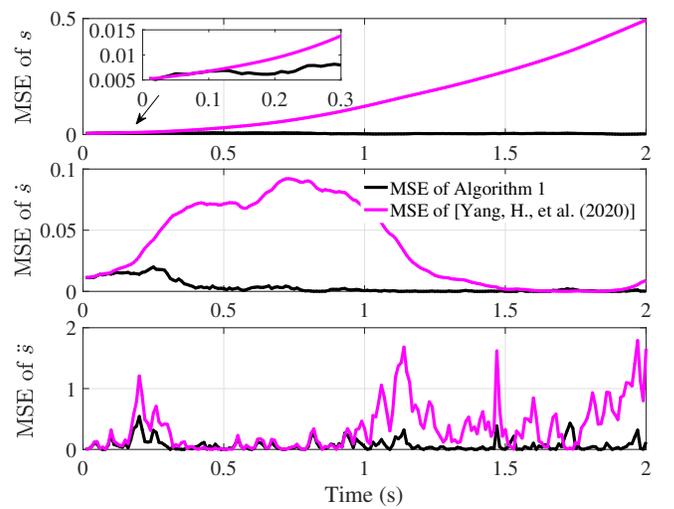}
	\caption{MSE of Algorithm \ref{alg1:Framwork} and Reference \cite{2020Distributed}  with transmission delay $d_t =3$ for case 1.}
	\label{fig_alg2_compare_mse}
\end{figure}

\subsection{Performance verification of Algorithm \ref{alg1:Framwork} under undirected graph and directed graph}

To reduce communication burden, the suggested algorithm is more relevant in practical traffic scenarios. In this subsection, we focus on the tracking effect of the same target by Algorithm \ref{alg1:Framwork} for undirected  and directed graph, respectively. Fig. \ref{fig_un_directed_compare_mse} shows the MSE of the tracking effect. 
The centralized algorithm is contrasted to examine the tracking effect of the proposed Algorithm \ref{alg1:Framwork} on the target.
Algorithm \ref{alg1:Framwork} can track the target accurately 
whether it is used with a directed or undirected graph. The MSE in Fig. \ref{fig_un_directed_compare_mse} agree with these results.

Through careful observation, with the same target, when Algorithm \ref{alg1:Framwork} is applied to the directed graph, its MSE is greater than that of the undirected graph because when the sensor network's communication topology is a directed graph, the number of iterations required to complete finite-time convergence is $\bar{d}_g d_t$, which is more than the number required for an undirected graph. 
In addition, when compared to an undirected graph, the quantity of data that can be sent at the same time is limited. Thus, MSE will be greater when the algorithm is used on directed graph.

\begin{figure}[!t]
	\centering
	\includegraphics[width=3.8in]{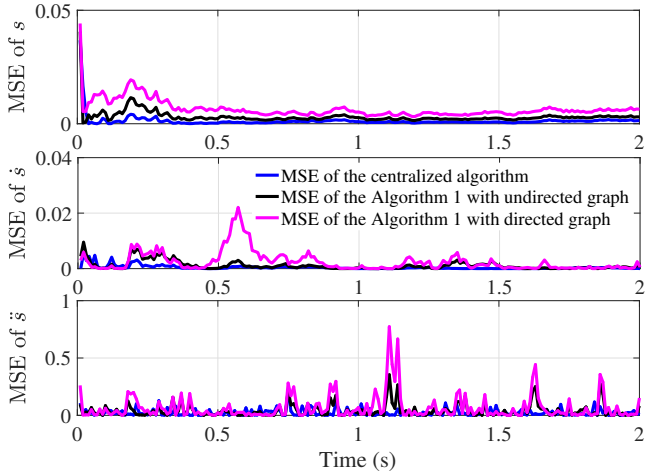}
	\caption{MSE of the Algorithm \ref{alg1:Framwork} for undirected (case 1) and directed (case 2) graphs.}
	\label{fig_un_directed_compare_mse}
\end{figure}

\subsection{Robustness of Algorithm \ref{alg2:Framwork} to different transmission delay  bounds}

By varying the upper bound of transmission delays, the robustness of the suggested Algorithm \ref{alg1:Framwork} can be investigated in more detail. Fig.\ref{fig_compare_delay} compares different delay bounds and shows that as transmission delay rises, the MSE of the suggested Algorithm \ref{alg1:Framwork} grows moderately. 
These MSEs all tend to be stable and diminish after a short number of iterations, which is due to the suggested algorithm's use of finite-time control technology.
\begin{figure}[!t]
	\centering
	\includegraphics[width=3.8in]{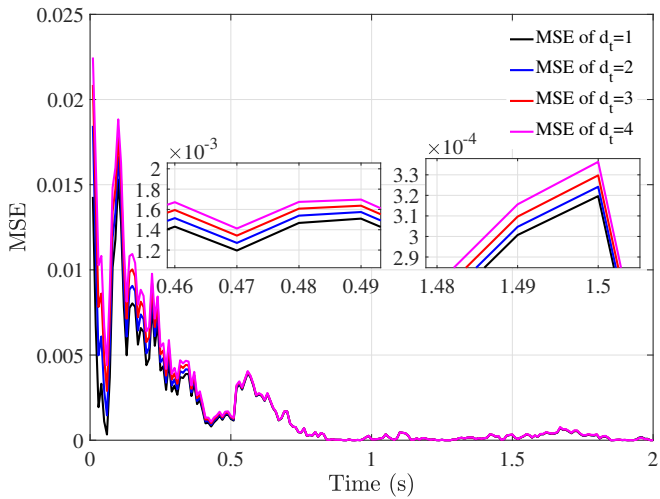}
	\caption{The position MSE of the proposed algorithm subject to different transmission delay  bounds for case 1.}
	\label{fig_compare_delay}
\end{figure}

\subsection{Experimental case-study}
To highlight the feasibility of the Algorithm \ref{alg1:Framwork} in practical  traffic scenarios, as shown in Fig. \ref{Experimental case-study}, three radar sensors are used to track a moving car target in this subsection. The car moves along the approximate straight line, starting from the initial position (0, 6.123)[m] with nearly-constant speed. Three 24G millimeter wave radar sensors (Nanoradar SP25 millimeter wave radar) are located in (0,0) to detect the relative distance and azimuth between the car and the radar sensors. 

The experiment consists of 21 samples with sampling interval $T$=1/100s. The obtained experimental results  are shown in Fig. \ref{Tracking result}, and the tracking error MSE is shown in Fig. \ref{Tracking MSE}.  

According to the tracking results shown in  Fig. \ref{Tracking result} and Fig. \ref{Tracking MSE}, it can be seen that the Algorithm \ref{alg1:Framwork} proposed in this paper has similar tracking effect with the centralized algorithm, and better than Reference \cite{2020Distributed}. The above tracking results are consistent with the previous simulation. Moreover, the excellent performance of the proposed algorithm is verified again.

\begin{figure}[!t]
	\centering
	\includegraphics[width=3.5in]{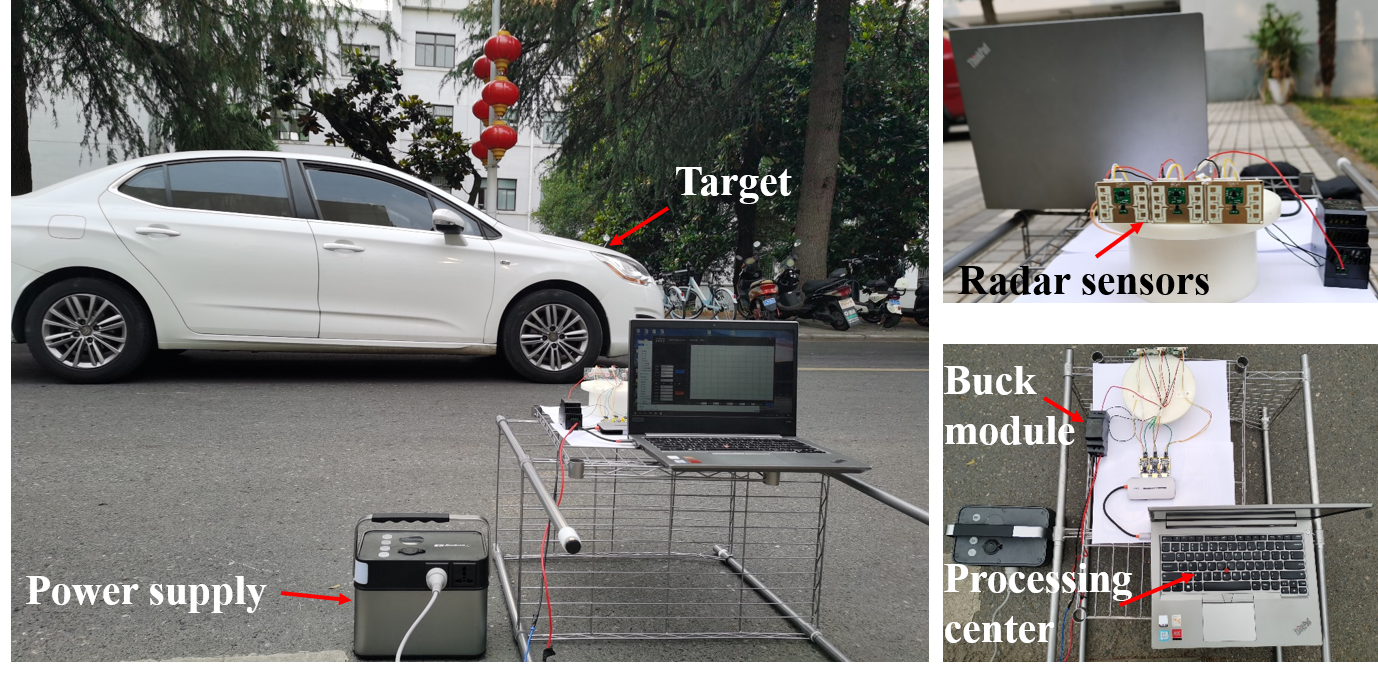}
	\caption{Experimental setup.}
	\label{Experimental case-study}
\end{figure}

\begin{figure}[!t]
	\centering
	\includegraphics[width=3.8in]{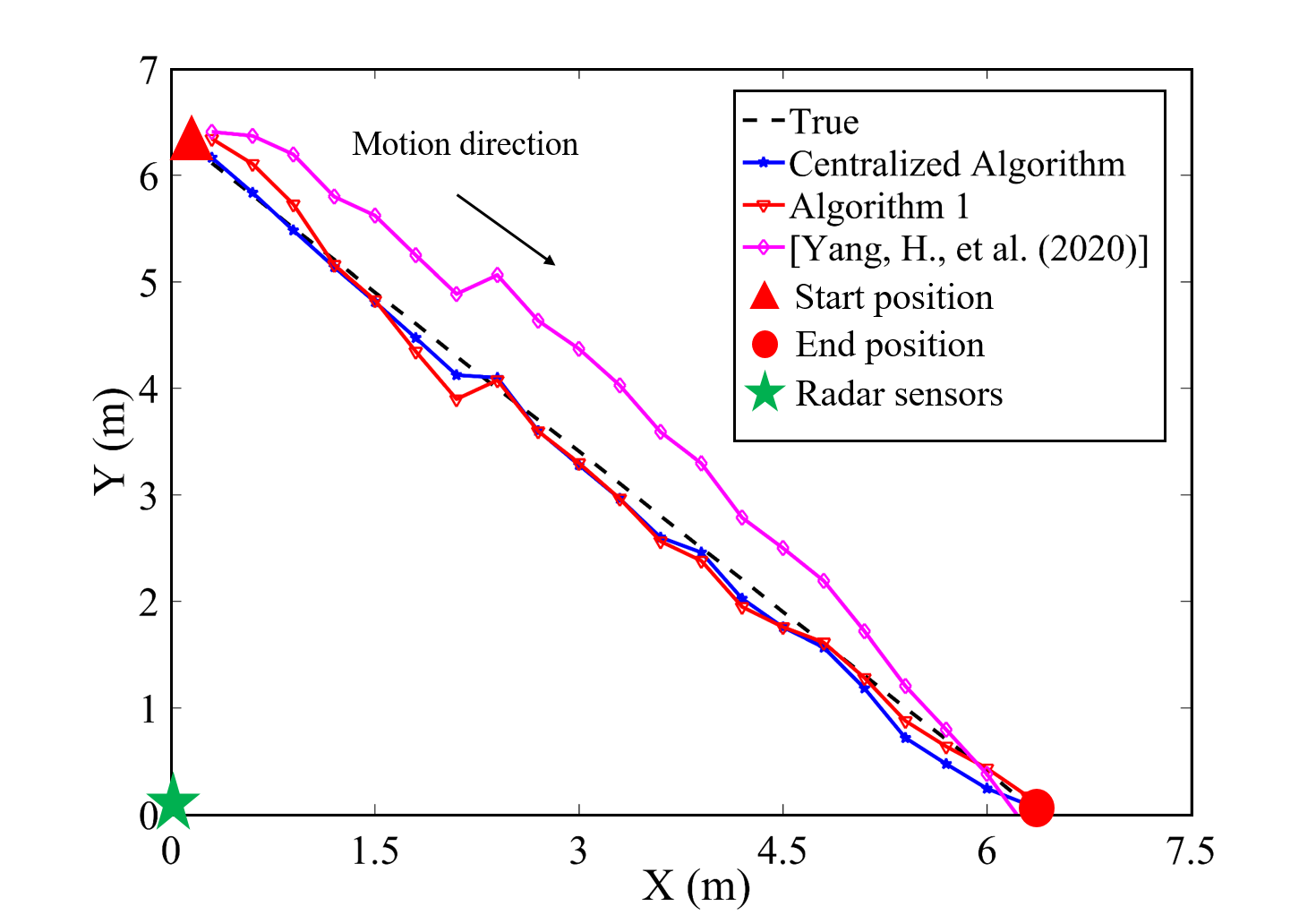}
	\caption{Experimental result.}
	\label{Tracking result}
\end{figure}

\begin{figure}[!t]
	\centering
	\includegraphics[width=3.8in]{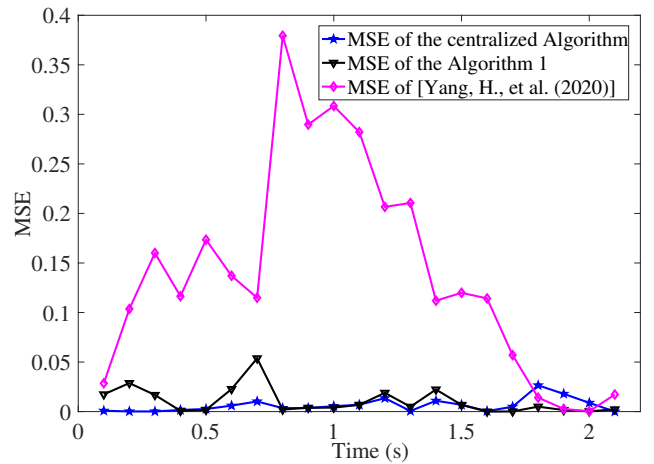}
	\caption{Position MSE with multi-sensor fusion.}
	\label{Tracking MSE}
\end{figure}

\section{Conclusion}\label{conclusion}
This paper investigates the problem of coordinate fusion of  target vehicle state for vehicle-borne sensor networks with time-varying transmission delays. Compared with previous literature, we consider more non-ideal communication constraints in vehicle network, including communication weight, transmission delay, and network topology. Combining the finite-time control strategy and buffer structure, an anti-delay DKF algorithm is designed, with rigorous proof of  simultaneous target state estimation error convergence and anti-delay convergence in the finite-time step.
The ingenious trick is that we use distributed design and optimal vector weight technology to address the high complexity problem caused by direct integration, making the algorithm easy to implement in practical vehicle actual states tracking. 
Furthermore, the upper bound of estimation error covariance is established by using the inverse matrix theorem and Gramian matrix characteristics. Based on this result, the maximum possible delay under the allowable detection accuracy can be determined.
In addition, both the finite-time convergent DFK algorithm and the derivation of the error covariance bound are generalized to strongly connected digraph communication topology, making it more valuable for practical target vehicle states tracking with worse communication conditions.
Finally, the effectiveness and feasibility of the proposed algorithm are verified by simulations and experiment.

\section*{Acknowledgment}
This study was supported in part by the Postgraduate Research \& Practice Innovation Program of Jiangsu Province KYCX20\_0316, foundation strengthening plan technical field fund 2021-JCJQ-JJ-0597.

\bibliographystyle{IEEEtran}
\bibliography{IEEEexample}

\begin{IEEEbiography}[{\includegraphics[width=1in,height=1.25in,clip,keepaspectratio]{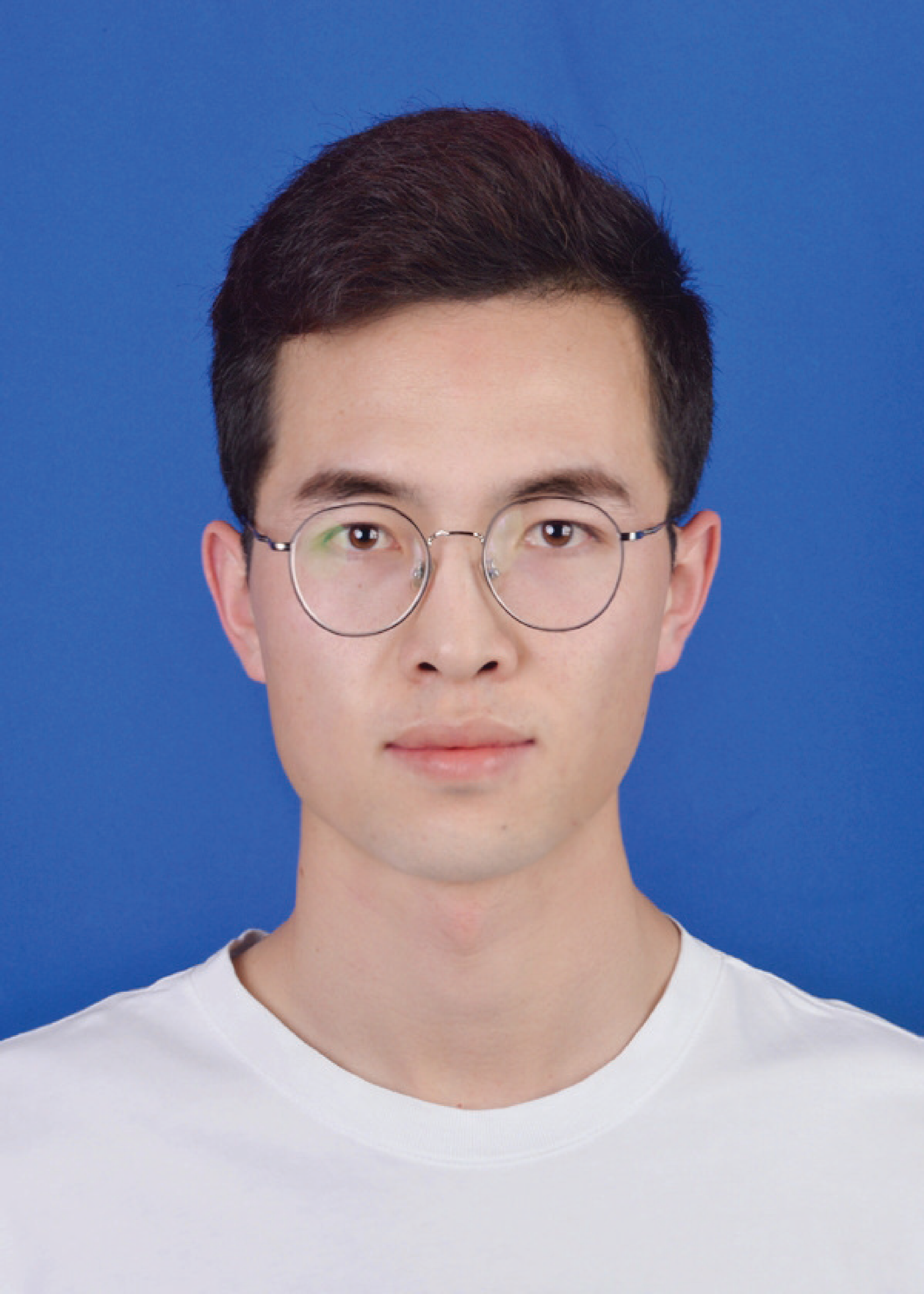}}]{Hang Yu}
	was born in Ningxia, China. He is currently pursuing the Ph.D. degree with the ZNDY of Ministerial Key Laboratory, Nanjing University of Science and Technology, Nanjing, China. His current research interests include networked control systems, cooperative control and intelligent fuze.
\end{IEEEbiography}
\vspace{11pt}

\begin{IEEEbiography}[{\includegraphics[width=1in,height=1.25in,clip,keepaspectratio]{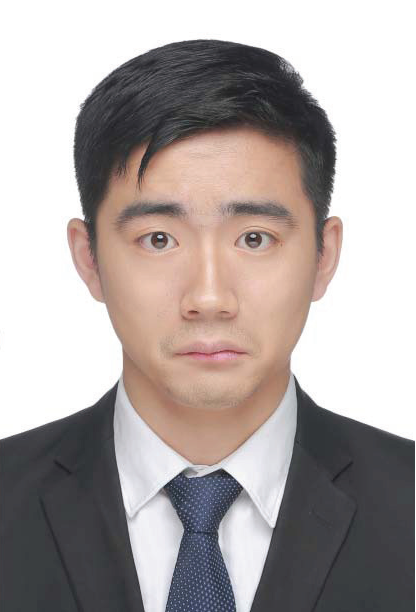}}]{Keren Dai}
	received the B.E. and Ph.D. degrees from Tsinghua University, Beijing, China, in 2014 and 2018, respectively. He is currently an Associate Professor with the School of Mechanical Engineering, Nanjing University of Science and Technology. His research interests include system modeling and simulation, signal processing, cooperative control, networked control systems, power management systems, and micro power devices.
\end{IEEEbiography}
\vspace{11pt}

\begin{IEEEbiography}[{\includegraphics[width=1in,height=1.25in,clip,keepaspectratio]{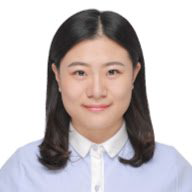}}]{Qingyu Li}
received the B.E. and ph.D. degrees from the Department of Electronics Engineering, Tsinghua University, Beijing, China, in 2014 and 2019  respectively. She is now an engineer at the North Information Control Research Academy Group Company, Ltd., Nanjing, China. Her research interests include communication systems, IoT nodes and networks, information theory and signal processing.
\end{IEEEbiography}
\vspace{11pt}

\begin{IEEEbiography}[{\includegraphics[width=1in,height=1.25in,clip,keepaspectratio]{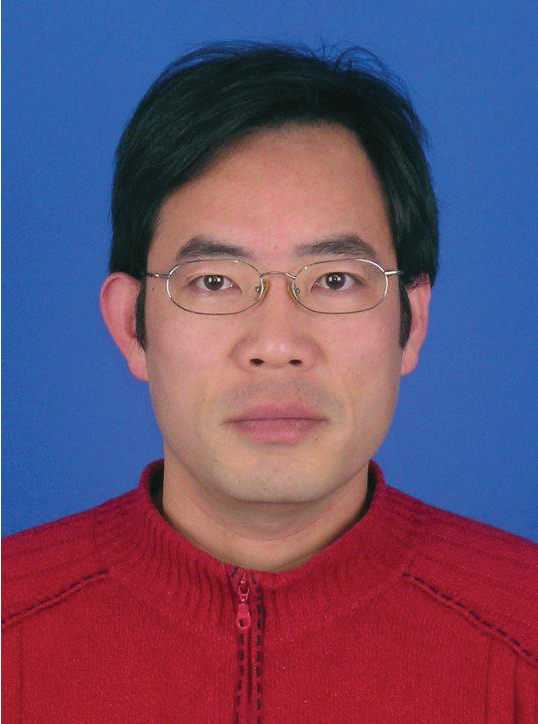}}]{Haojie Li}
	was born in Shanxi, China. He received the Ph.D. degree in mechatronic engineering from the Nanjing University of Science and Technology, Nanjing, China, in 2006, where he is currently a Professor with the School of Mechanical Engineering. His current research interests include intelligent fuze, nonlinear control, detection guidance and control technology, and attitude control technology of floating platforms on water.
\end{IEEEbiography}
\vspace{11pt}

\begin{IEEEbiography}[{\includegraphics[width=1in,height=1.25in,clip,keepaspectratio]{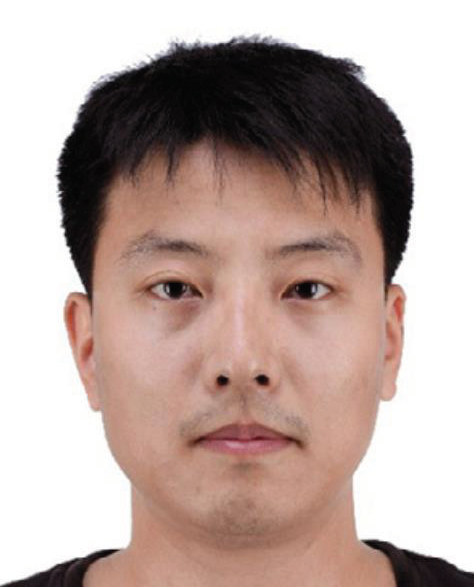}}]{Yao Zou}
	received a B.S. degree in Automation from Dalian University of Technology (DUT), Dalian, China, in 2010, and a Ph.D. degree in Control Science and Engineering from Beihang University (BUAA, formerly named Beijing University of Aeronautics and Astronautics) Beijing, China, in 2016.
	
	He was a Post-Doctoral Research Fellow with the Department of Precision Instrument, Tsinghua University, Beijing, from 2017 to 2018. He is currently a Professor with the School of Automation and Electrical Engineering, University of Science and Technology Beijing, Beijing. His current research interests include nonlinear control, unmanned aerial vehicle control, networked control systems and multiagent control.
\end{IEEEbiography}
\vspace{11pt}

\begin{IEEEbiography}[{\includegraphics[width=1in,height=1.25in,clip,keepaspectratio]{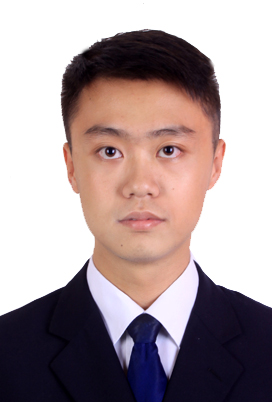}}]{Xiang Ma}
	was born in Neimenggu, China. He is currently pursuing the Ph.D. degree with the ZNDY of Ministerial Key Laboratory, Nanjing University of Science and Technology, Nanjing, China. His current research interest is  cooperative control, networked control systems.
\end{IEEEbiography}
\vspace{11pt}

\begin{IEEEbiography}[{\includegraphics[width=1in,height=1.25in,clip,keepaspectratio]{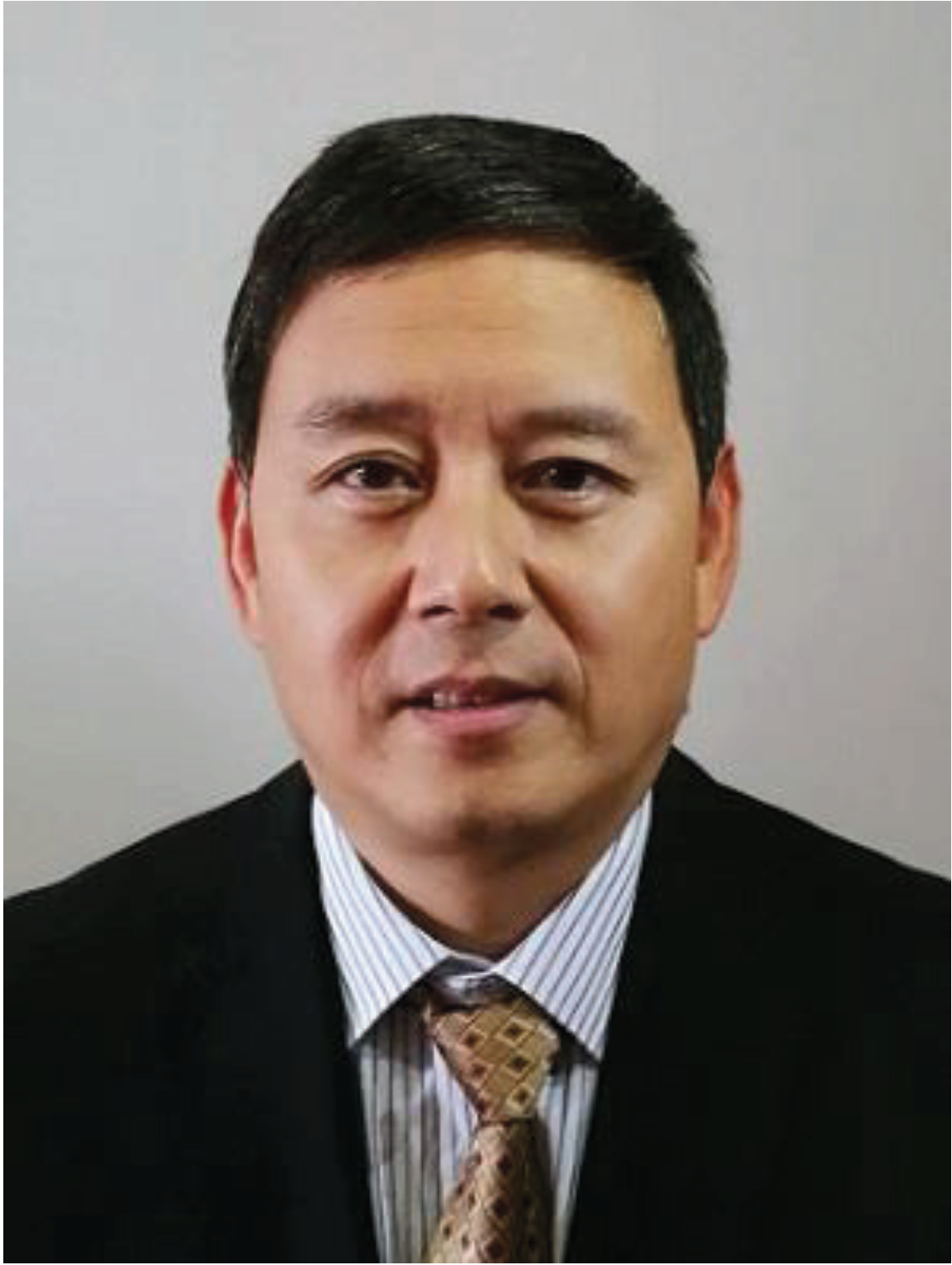}}]{Shaojie Ma}
	was born in Henan, China. He received the Ph.D. degree in mechatronic engineering from the Nanjing University of Science and Technology, Nanjing, China, in 2006, where he is currently a Professor with the School of Mechanical Engineering. His current research interests include intelligent fuze, electromechanical system analysis and design technology and nonlinear control.
\end{IEEEbiography}
\vspace{11pt}

\begin{IEEEbiography}[{\includegraphics[width=1in,height=1.25in,clip,keepaspectratio]{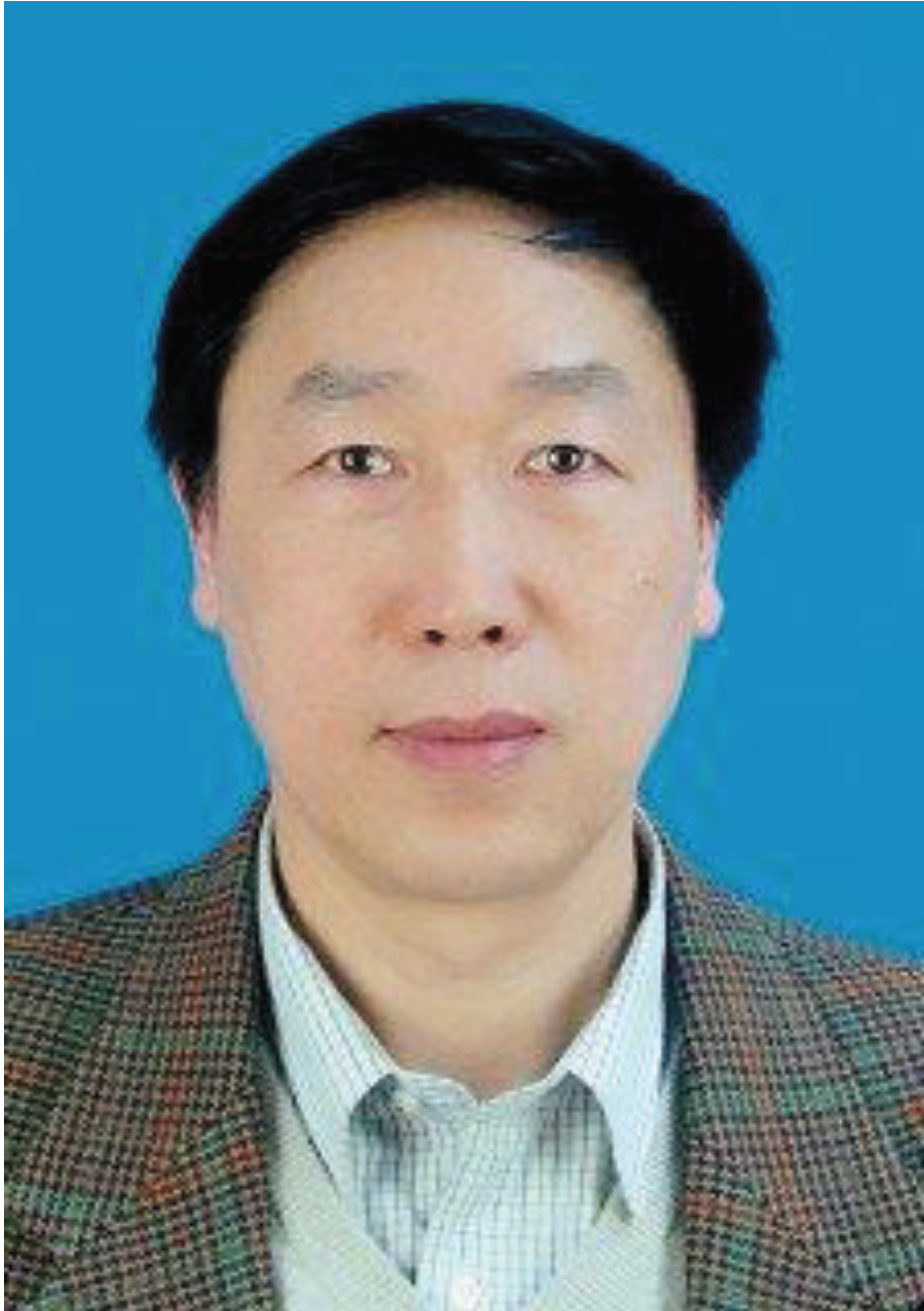}}]{He Zhang}
	was born in Henan, China. He received his Ph.D degree in Measurement Technology and Instruments from Nanjing University of Aeronautics and Astronautics, Nanjing, China. He is currently a Professor in School of Mechanical Engineering, Nanjing University of Science and Technology, Nanjing, China. His research interests include networked control systems, nonlinear control, mechatronics and weapon system applications. Professor ZHANG is the director of the Institute of Mechanical and Electrical Engineering of NJUST, the associate director of the ZNDY National Defense Key Laboratory, and the editorial board of the Journal of Detection and Control.
\end{IEEEbiography}

\vfill

\end{document}